\date{}
%
\documentclass[11pt]{article}
\usepackage{hyperref}
\usepackage{tikz}
\usetikzlibrary{snakes}
\usetikzlibrary{decorations.text,calc,arrows.meta}
\usepackage[vcentermath]{youngtab}
\usepackage{tkz-euclide}
\usepackage{pgfplots}
\usetikzlibrary{decorations.shapes}
\usetikzlibrary{positioning}
\usetikzlibrary{patterns}
\usetikzlibrary{shapes.geometric}
\usetikzlibrary{decorations.pathmorphing}
\usetikzlibrary{arrows.meta}
\tikzset{snake it/.style={decorate, decoration=snake}}
\usetikzlibrary{arrows,shapes,positioning}
\usetikzlibrary{decorations.markings}
\tikzstyle arrowstyle=[scale=1]
\tikzstyle directed=[postaction={decorate,decoration={markings,mark=at position .65 with {\arrow[arrowstyle]{stealth}}}}]
\tikzstyle reverse directed=[postaction={decorate,decoration={markings,mark=at position .65 with {\arrowreversed[arrowstyle]{stealth};}}}]



\textwidth 15.3cm
\oddsidemargin 0in
\evensidemargin 0in
\textheight 22.3cm
\topmargin 0in
\headsep 0in

\usepackage{amsmath,amsthm,amsfonts,amssymb,amsopn,amscd} 
\usepackage{color}

\def\qed{{\unskip\nobreak\hfil\penalty50
\hskip2em\hbox{}\nobreak\hfil$\square$
\parfillskip=0pt \finalhyphendemerits=0\par}\medskip}
\def\proof{\trivlist \item[\hskip \labelsep{\bf Proof.\ }]}

\def\tilde{\widetilde}

\def\l{\lambda}

\def\om{\omega}
\def\Om{\Omega}

\def\proof{\trivlist \item[\hskip \labelsep{\bf Proof.\ }]}

\def\eproof{\null\hfill\qed\endtrivlist\noindent}

\newcommand{\bthm}{\begin{theorem}}
\newcommand{\ethm}{\end{theorem}}

\newcommand{\bprop}{\begin{proposition}}
\newcommand{\eprop}{\end{proposition}}
\newcommand{\bcor}{\begin{corollary}}
\newcommand{\ecor}{\end{corollary}}
\newcommand{\blem}{\begin{lemma}}
\newcommand{\elem}{\end{lemma}}

\def\RR{{\mathbb R}}

\def\C{{\cal C}}
\def\D{{\cal D}}

\def\M{{\cal M}}
\def\N{{\cal N}}
\def\R{{\cal R}}

\def\H{{\cal H}}

\def\S{{\cal S}}

\def\f{{\varphi}}

\def\l{{\lambda}}

\def\PSL{{{\rm PSL}(2,\mathbb R)}}

\def\S2{S^{1(2)}}

\def\RR{\mathbb R}
%


\newtheorem{theorem}{Theorem}[section]
\newtheorem{lemma}[theorem]{Lemma}

\newtheorem{corollary}[theorem]{Corollary}

\newtheorem{proposition}[theorem]{Proposition}

\theoremstyle{definition} 

\theoremstyle{remark} \newtheorem{remark}[theorem]{Remark}

\newcommand{\ben}{\begin{equation}}
\newcommand{\een}{\end{equation}}

\def\PSL{PSU(1,1)}

\def\CC{{\mathbb C}}

\def\SL2{{{\rm SL}(2,\R)}}

\def\PSL2{{{\rm PSL}(2,\Reali)}}

\def\U1{{{\rm V}(1)}}
\def\SU2{{{\rm SV}(2)}}

\def\SU{{{\rm SU}}}

\def\C{{\mathcal C}}
\def\D{{\mathcal D}}

\def\H{{\mathcal H}}

\def\M{{\mathcal M}}
\def\N{{\mathcal N}}

\def\P{{\mathcal P}}

\parskip1mm

\title{\Huge{A New Proof of the QNEC
}}

\author{ {\sc Stefan Hollands}\\
Institut für Theoretische Physik, Universität Leipzig \\
Brüderstrasse 16, 04103 Leipzig, Germany\\
Max Planck Institute for Mathematics in Sciences (MiS)\\
Inselstra{\ss}e 22, 04103 Leipzig, Germany
\\
{}
\\
{\sc Roberto Longo}\\
Dipartimento di Matematica,
Tor Vergata Universit\`a di Roma\\
Via della Ricerca Scientifica, 1, I-00133 Roma, Italy
}

\date{}
\begin{document}

\maketitle

\begin{abstract}
We give a simplified proof of the quantum null energy condition (QNEC). Our proof is based on an explicit formula for the shape derivative of the relative entropy, with respect to an entangling cut. It allows bypassing the analytic continuation arguments of a previous proof by Ceyhan and Faulkner and can be used e.g., for defining entropy current fluctuations. 
\end{abstract}

\section{Introduction}

The quantum null energy condition (QNEC) states that\footnote{We use units such that $k_B=\hbar=c=1.$} 
\ben
\label{QNEC1}
2\pi \langle T_{\alpha\beta} \rangle k^\alpha k^\beta \ge S_{\rm EE}'', 
\een
where the double prime indicates the second shape variation, of an entangling cut, in the direction $k^\alpha$ within an outgoing, non-expanding, affinely parameterized null-surface tangent and normal to $k^\alpha$. $S_{\rm EE}$ is the entanglement entropy associated with the cut and some state of the quantum field theory (QFT), and $\langle T_{\alpha\beta} \rangle$ is the expected stress energy tensor\footnote{In a curved spacetime, the operator $T_{\alpha\beta}$ has well-known ambiguities, and it is somewhat unclear how these are understood in \eqref{QNEC1}. However, such ambiguities are not present in $T_{\alpha\beta}k^\alpha k^\beta$ if the null surface is a (future) bifurcate Killing horizon, where $C_{\alpha\beta}k^\alpha k^\beta=0$ for any covariant tensor $C_{\alpha\beta}$ made from contractions of the Riemann tensor and its covariant derivatives 
\cite{bobred}.} of the QFT in that state.

The QNEC can be seen as a semi-classical limit of the quantum focusing conjecture \cite{B1}, potentially a fundamental feature of quantum gravity.  
Among other things, it is the basis of the generalized second law for dynamical {\it apparent} black hole horizons \cite[Sec. VII]{bobred}.

In \cite{B2}, a heuristic argument in favor of the QNEC was given.
But it is not straightforward to obtain a mathematically rigorous proof---nor even statement---of the QNEC, because both the precise formulation of second shape variation as well as the notion of entanglement entropy are subtle in QFT. 

Ceyhan and Faulkner proposed \cite{CF18} that ``half-sided modular inclusions'' \cite{Bor1, Bor2, Wies1, AZ05,Florig} associated with the entangling cut are a natural framework for the QNEC. By this one means an inclusion $\N \subset \M$ of von Neumann algebras together with a pure state $\Omega$
such that the dynamics associated with the reduction of that state to $\M$, called the ``modular flow'', cannot leave $\N$ for positive flow times, see sections \ref{modthbas}, \ref{hsm} for precise definitions.

A manifestation of this structure is given by the set-up illustrated in figure \ref{fig:1}: $\M$ is the algebra of QFT observables associated with the entire exterior region of Schwarzschild spacetime, $\N$ the subalgebra associated with an entangling cut $C_a$ of the future horizon, and $\Omega$ is the Hartle-Hawking state. The modular flow corresponds to the time-translation symmetry of the Schwarzschild spacetime.

\begin{figure}[h!]
\label{fig:1}
\centering
\begin{tikzpicture}[scale=.5]
\draw[snake] (-2,5) -- (6,5);
\draw[snake] (6,-3) -- (-2,-3);
\filldraw[color=black, fill=darkgray, thin] (6,-3) -- (10,1) -- (6,5) -- (2,1) -- (6,-3);
\filldraw[color=black, fill=lightgray, thin] (6,-3) -- (8,-1) -- (4,3) -- (2,1) -- (6,-3);
\filldraw[color=black, fill=white, thin] (-2,-3) -- (-6,1) -- (-2,5) -- (2,1) -- (-2,-3);
\draw[ thick, dashed] (-2,-3) -- (6,5);
\draw[thick, dashed] (6,-3) -- (-2,5);
\draw (6,5) node[anchor=south]{$i^+$};
\draw (3.2,2.5) node[anchor=east]{$H^+$};
\draw (4.2,3.5) node[anchor=east]{$C_a$};
\draw (0.3,-.3) node[anchor=east]{$\bar H^-$};
\draw (8.4,3) node[anchor=west]{$I^+$};
\draw (-4.4,-1) node[anchor=east]{$\bar I^-$};
\draw[->,thick] (5.7,-2.3) .. controls (2.5,1)  .. (5.7,4.3);
\draw[->,thick] (6.3,-2.3) .. controls (9.5,1)  .. (6.3,4.3);
\end{tikzpicture}
\caption{$\N$ corresponds to the dark gray region, $\M$ to the light- and dark gray regions combined. The arrows indicate the orbits of the positively directed modular flow of the Hartle-Hawking state with respect to $\M$.}
\end{figure}
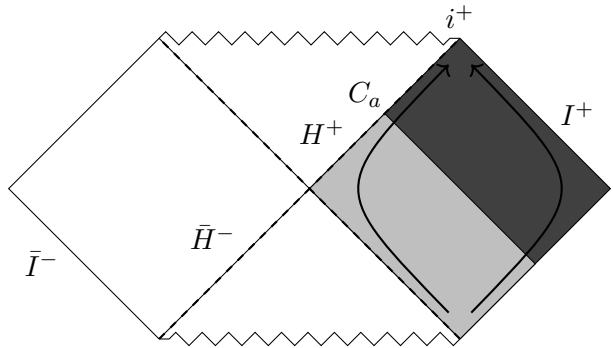

The structural theorem \cite{Wies1,AZ05,Florig} about half-sided modular inclusions states that there always exists a {\it positive} self-adjoint generator $P$ of ``translations'' whose unitary Heisenberg evolution $U(a)=e^{iaP}$ obeys the relations of an affine group $A(1)$ with the modular flow of $\Omega$. In the above example of Schwarzschild spacetime, $U(a)$ indeed implements affine translations of the dark gray wedge, sliding it along $H^+$. 
It is noteworthy that such translations are {\it not} isometries of Schwarzschild spacetime, so the existence of $P$ is non-trivial. Heuristically, $P$ is an integral over the null-components of the stress tensor\footnote{Informally, $P=\int_{H^+ \cup \bar H^- \cup I^+ \cup \bar I^-} T_{\alpha\beta}k^\alpha dS^\beta$, see figure \ref{fig:1}.}, though one should stress that the notion of half-sided modular inclusion does not require this object a priori. 

A half-sided modular inclusion defines a whole one-parameter family of nested {\color{blue} (decreasing in $a$)} algebras by $\M(a) = U(a) \M U(a)^*$. For each $a$ and each state $\Phi$ in the Hilbert space, one may then define Araki's relative entropy \cite{Ar76,Ar77}
\ben
S(a) := S(\Phi |\! | \Omega)_{\M(a)}
\een
with respect to $\M(a)$. Ceyhan and Faulkner argued \cite{CF18} that a mathematically rigorous reformulation of the QNEC \eqref{QNEC1} should be that $S$ is a
convex function, meaning that 
\ben
\label{par2pos}
\partial^2 S(a) \ge 0 
\een
if it is twice differentiable. 
Their formulation nicely avoids the technical problems with \eqref{QNEC1} in two ways: arbitrary shape variations are included because the inequality holds for a half-sided modular inclusion associated with an {\it arbitrary} entangling cut.
Furthermore, the relative entropy is better defined mathematically than the entanglement entropy, and formally yields the terms on the left and right side of \eqref{QNEC1} at the same time. 

More concretely, the connection between \eqref{par2pos} and \eqref{QNEC1} can be seen from the heuristic formulas
\ben
\omega_a = e^{-2\pi K_a}, \quad 
K_a = \int\limits_{V>a} (V-a) T_{\alpha\beta}k^\alpha dS^\beta + \dots , 
\een
for the reduced density matrix, $\omega_a$, of $\Omega$ to the dark gray region in figure \ref{fig:1} that is defined by the cut $C_a$. Here, $V$ is an affine parameter on $H^+$ such that $k^\alpha\nabla_\alpha V = 1$, $V=a$ on $C_a$
and such that $V=0$ on the bifurcation surface.
The dots indicate additional contributions, for example,
from $I^+$ and $i^+$, as well as a formally infinite constant, all of which arguably do not depend on $a$. These disappear when the two derivatives are applied in \eqref{par2pos} using the heuristic formula $S(a) = {\rm Tr}(\varphi_a \log \varphi_a - \varphi_a \log \omega_a)$ for the relative entropy in terms of the reduced density matrix $\varphi_a$ of $\Phi$. A short calculation thereby gives a formula equivalent to \eqref{QNEC1} for an appropriate notion of second shape variation.

A proof of the QNEC, in their formulation, was also given by \cite{CF18}, based on the ``ant-formula'' suggested by a parable in which a massless
‘ant’ tries to estimate the null-energy from observations along its trajectory \cite{W1}. The ant-formula is 
\ben
\label{antleftintro}
-\partial S(a) = 2\pi \inf_{u'} (u'\Phi, P u'\Phi)
\een
in the formulation by \cite{CF18}, where $u'$ runs over unitary operators from the commutant $\M(a)'$ of $\M(a)$. The variational nature of this formula immediately gives\footnote{Note that monotonicity of the relative entropy only gives $S(b) \le S(a)$ when $b \ge a$.} $\partial S(b) \ge \partial S(a)$ when $b \ge a$, because $\M(a)' \subset \M(b)'$. We thereby obtain the QNEC even in cases when only the first, but not second, derivative of $S(a)$ exists.

While \cite{CF18} could show without too much difficulty that the right side of \eqref{antleftintro} cannot be smaller than 
$-\partial S(a)$, the construction of a sequence of unitaries $u'_s$ saturating  \eqref{antleftintro} turned out to be the central technical problem in their proof, requiring complicated techniques of analytic continuation in two variables and advanced methods from complex analysis. 

In the present paper, we will give a substantially simplified argument in this crucial step, avoiding any such analytic continuations. Like the construction by \cite{CF18}, we shall choose the minimizing sequence $u'_s$ to be the Connes-cocycle \cite{C73} between $\Omega, \Phi$ sending the parameter $s \to \infty$.
However, our arguments why this saturates the ant-formula are different from \cite{CF18} and rely on an explicit formula for the derivative of the relative entropy which we derive in section \ref{paSform}:
\ben
\label{prop2s1intro}
\partial S(a) = i\left(\Phi, [P, \log \Delta_a'] \Phi \right),
\een
where $\Delta'_a$ is a relative modular operator \cite{Ar76,Ar77} between $\Phi,\Omega$ with respect to $\M(a)'$. From \eqref{prop2s1intro} to the proof of the QNEC is a relatively short step, described in section \ref{qnecproof}.

Equation \eqref{prop2s1intro} might be of independent interest because it displays the 
relative entropy flux as an expectation value of a suitably defined entropy-current operator. In particular, 
one may consider the variance (fluctuations) of the relative entropy flux, see section \ref{outlook} for further discussion. 

We finally note that in this paper we derive the QNEC for a natural dense set of vector states. The set of vectors considered by \cite{CF18} is possibly larger but not complete. Proving the QNEC for all vector states remains an open problem.

\section{Relative modular operators and entropy}
\label{modthbas} 

Let $\M$ be a von Neumann algebra on a Hilbert space $\H$, and $\Om,\Phi\in\H$ cyclic and separating vectors for $\M$. The {\it  relative Tomita's operators} \cite{Ar76,Ar77} on $\H$ are given by the closures of
\[
S_{\Om,\Phi} \equiv S_{\Om,\Phi;\M} : x\Phi \mapsto x^*\Om\, , \quad x\in \M\, ,
\]
\ben\label{SF}
S'_{\Om,\Phi} \equiv S_{\Om,\Phi;\M'}: x'\Phi \mapsto {x'}^*\Om\, , \quad x'\in \M'\, .
\een
By considering the polar decompositions, we get the {\it relative modular operators and conjugations}:
\ben
\label{poldec}
S_{\Om,\Phi} = J_{\Om,\Phi}\Delta^{1/2}_{\Om,\Phi}\, ,\qquad S'_{\Om,\Phi} = J'_{\Om,\Phi}\Delta'^{1/2}_{\Om,\Phi}\, .
\een
Recall the formulas
\ben\label{JJ3}
J'_{\Om,\Phi} = J_{\Phi,\Om} = J^*_{\Om,\Phi}  \, , \qquad  {\Delta}'_{\Om,\Phi} = {\Delta}^{- 1}_{\Phi,\Om} = J_{\Om,\Phi}\Delta_{\Om,\Phi} J^*_{\Om,\Phi} \, .
\een
From the last equality in \eqref{JJ3} we have
\[
J_{\Om,\Phi}  f({\Delta}_{\Om,\Phi}) J^*_{\Om,\Phi}  = \bar f(\Delta^{-1}_{\Phi,\Om})
\]
for every complex Borel function $f$ on $(0, \infty)$, with $\bar f$ the complex conjugate of $f$. 
The modular operators have the covariance properties:
\ben
\label{covariance}
{\Delta}_{v\Om,u\Phi} = v{\Delta}_{\Om,\Phi}v^*, \quad
{\Delta}_{v'\Om,u'\Phi} =  u'{\Delta}_{\Om,\Phi}u'^*,
\een
for any unitaries $u,v \in \M, u', v' \in \M'$.
We also use the modular flow, which are the 1-parameter groups of 
automorphisms of $\M$ respectively $\M'$ given by
\ben
\label{mflow1}
\sigma^\Phi_t(m) = \Delta_\Phi^{it} m \Delta_\Phi^{-it}, \quad \sigma'^\Phi_t (m') = \Delta_\Phi^{-it} m' \Delta_\Phi^{it}
\een
respectively. Here, and in the rest of the paper, we define $\Delta_\Phi := \Delta_{\Phi,\Phi}$. The modular flows may also be expressed with the help of the relative modular operators:
\ben
\label{mflow2}
 \Delta_{\Phi,\Omega}^{it} m \Delta_{\Phi,\Omega}^{-it} = \Delta^{it}_\Phi m \Delta^{-it}_\Phi ,
 \quad 
  \Delta_{\Phi,\Omega}^{it} m' \Delta_{\Phi,\Omega}^{-it} =  \Delta^{it}_\Om m' \Delta_{\Omega}^{-it}.
\een
With $\f = (\Phi, \cdot \Phi)$, $\om = (\Om, \cdot \Om)$ the states on $\M$ associated with $\Phi,\Om$, the following formulas for the {\it Connes-cocycles} \cite{C73} hold:
\ben\label{uDD}
u_s =(D\om : D\f)_s = \Delta^{is}_{\Om,\Phi}\Delta^{-is}_{\Phi} =
\Delta^{is}_{\Om}\Delta^{-is}_{\Phi,\Om}
\, ; 
\een
$u_s$ respectively $u_s'$ are unitary operators from $\M$ respectively $\M'$ for all $s \in \RR$.
Since $(D\om :D \f)_s = (D\f : D\om)^*_s$, we have by \eqref{JJ3}
\ben\label{uu}
u_s = \Delta^{is}_{\Om,\Phi}\Delta^{-is}_{\Phi} = \big(\Delta^{is}_{\Phi,\Om}\Delta^{-is}_{\Om}\big)^*
= \Delta^{is}_{\Om}\Delta^{-is}_{\Phi,\Om} = \Delta^{is}_{\Om}\Delta'^{is}_{\Om,\Phi}\, .
\een
Set $u'_s  =(D\om' : D\f')_s$, with $\f' = (\Phi, \cdot \Phi)$, $\om' = (\Om, \cdot \Om)$ on $\M'$. 
Then 
\ben\label{uu2}
u_{-s} u'_{s} = \Delta^{-is}_{\Om}\Delta^{is}_{\Phi} \, .
\een
One may also define unitary cocycles associated with the modular conjugations \cite[App. C]{AM82}, \cite[App. A]{CF18}:
\ben
v_{\Omega,\Phi} = J'_{\Omega,\Phi} J'_\Phi = J'_{\Omega} J'_{\Omega,\Phi} \in \M,
\een
and similarly for $\M'$.

If $\Phi$ is not cyclic or not separating, then appropriate modifications of the above formulas 
involving the so-called support projections $s(\Phi) \in \M, s'(\Phi) \in \M'$ apply, where for instance
$s(\Phi)$ is the orthogonal projection onto the closure of the subspace $\M'\Phi$. 
For details on such relations see \cite{Ar77}, \cite[App. C]{AM82}, \cite[App. A]{CF18}. 

For a cyclic and separating vector $\Omega$, Araki's {\it relative entropy} is defined by \cite{Ar76,Ar77}
\ben
\label{Sdef}
S(\Phi|\!| \Om)_\M =  - (\Phi , \log \Delta_{\Om,\Phi}\Phi). 
\een
The first covariance property \eqref{covariance} implies that 
\ben
\label{Sdefcov}
S(u'\Phi|\!| v'\Om)_\M = S(\Phi|\!| \Om)_\M, \quad 
S(u\Phi|\!| v\Om)_{\M'} = S(\Phi|\!| \Om)_{\M'},
\een
for any isometries $u,v \in \M, u',v' \in \M'$. This implies that $S(\Phi|\!| \Om)_\M \equiv S(\varphi|\!| \omega)_\M$ i.e., the relative entropy only depends on the 
functionals $\f = (\Phi, \cdot \Phi)$, $\om = (\Om, \cdot \Om)$, the states on $\M$ associated with $\Phi,\Om$, and similarly for the commutant (second formula). 

\blem\label{mD}
Let $\M$ be a von Neumann algebra on the Hilbert space $\H$ and $\Phi, \Om\in\H$ cyclic and separating vectors. Then
there exists $m_\l\in \M$ such that
\ben\label{f1}
(\Delta_{\Phi,\Om} + \l)^{-1}\Phi = m_\l\Om
\een
for any $\l >0$. 
In addition, 
the map $\l \mapsto m_\l$ is strongly continuous and we have
\ben\label{f2}
\int_0^\infty m_\l m_\l^* d\l = 1
\een
(integral in the weak topology). 
\elem
\proof
We have
\[
(\Delta_{\Phi,\Om} + \l)^{-1} = (\Delta^{1/2}_{\Phi,\Om}  + \l \Delta^{-1/2}_{\Phi,\Om} )^{-1} \Delta^{-1/2}_{\Phi,\Om}
= (\Delta^{1/2}_{\Phi,\Om}  + \l \Delta^{-1/2}_{\Phi,\Om} )^{-1} \Delta'^{1/2}_{\Om,\Phi}\, .
\]
From the formula
\[
 \frac{1}{e^{p/2} + e^{-p/2}} = 
\int_{-\infty}^\infty\frac{e^{itp}}{e^{\pi t} + e^{-\pi t}}dt 
\]
we then get
\ben
\label{f3}
(\Delta_{\Phi,\Om} + \l)^{-1}\Phi
= \,\frac{1}{2\sqrt{\lambda}} \int_\RR \frac{\lambda^{-it}}{\cosh(\pi t)} \Delta_{\Phi,\Omega}^{it} \, dt \, {\Delta'^{1/2}_{\Om,\Phi} }\Phi
= \, \frac{1}{2\sqrt{\lambda}} \int_\RR \frac{\lambda^{-it}}{\cosh(\pi t)} \Delta_{\Phi,\Omega}^{it}  \, dt \, J_{\Omega, \Phi} \Omega \, .
\een 
Using the properties of the relative modular flow and the definition of the Connes-cocycle, we have
\[
\Delta_{\Phi,\Omega}^{it} J_{\Omega, \Phi} \Omega 
= \Delta_{\Phi,\Omega}^{it}  J_{\Phi,\Omega}' J_\Omega' \Omega
= \Delta_{\Phi,\Omega}^{it} w \Omega
= \Delta_{\Phi,\Omega}^{it}  w \Delta_{\Phi,\Omega}^{-it} \Delta_{\Phi,\Omega}^{it} \Delta_{\Omega}^{-it}\Omega
= \sigma^{\f}_{t}(w) (D\varphi:D\omega)_{t} \Omega\, , 
\]
where $w = J_{\Phi,\Omega}' J_\Omega'\in \M$ is a unitary and $\om$, $\f$ are the states on $\M$ associated with $\Om$, $\Phi$. 

Therefore formula \eqref{f1} holds with
\ben\label{M1}
m_\l = \frac{1}{2\sqrt{\lambda}} \int_\RR \frac{\lambda^{-it}}{\cosh(\pi t)} v_t\, dt 
\een
where $v_t = \sigma^{\f}_{t}(w) (D\varphi:D\omega)_{t}$. 

The strong continuity of $\l \mapsto m_\l$ follows from \eqref{M1} and the Lebesgue dominated convergence theorem.

We now prove the relation \eqref{f2}. 
The map $t\mapsto v_t^{*} \xi/\cosh(\pi t)$ is an $\H$-valued $L^2$-function on $\RR$ for all $\xi \in \H$. 
Therefore, since the Fourier transform is an isometry of $L^2(\RR, \H)$, also $u \mapsto e^{u/2} m^{*}_{e^u} \xi$ is an $\H$-valued $L^2$-function on $\RR$.
Using that the Fourier transform is an isometry of $L^2(\RR, \H)$ and making a change of integration variable from $\lambda$ to $u=\log \lambda$, we therefore have by \eqref{f3}, for any pair $\xi,\eta \in \H$:
\ben
\label{f4}
\int_{(0,\infty)} \left( \xi, m_\lambda m^{*}_\lambda \eta \right) d\lambda
= \frac\pi2 \int_{\RR}  \frac{\left( \xi, v_t v^{*}_t\eta \right) }{\cosh^2(\pi t)}dt
= \frac\pi2 \int_{\RR}  \frac{\left( \xi,  \eta \right) }{\cosh^2(\pi t)}dt
= (\xi,  \eta)
\een
because $v_t$ is unitary.  
\eproof
\begin{remark}
By the same argument, Lemma \ref{mD} shows that 
$(\Delta_{\Phi,\Om} + \l)^{-1}\M'\Phi \subset \M\Omega$. The case $\Phi = \Omega$ is Tomita's lemma, used in most proofs of the Tomita-Takesaki main theorem.
\end{remark}
Let now $T_n, T$ be  (anti)-linear, densely defined, closed operators on the Hilbert space $\H$. Denote by $P_T$ the orthogonal projection on $\H\oplus\H$ onto the graph of $T$. We write $T_n \to_gT$  if $P_{T_n} \to P_T$ strongly. 

As is known \cite{Stone}, we have
\[
P_T = 
\begin{pmatrix}(1 + T^*T)^{-1} & T^* (1 + TT^*)^{-1}  \\
T (1 + T^*T)^{-1}  & (1 + TT^*)^{-1} 
\end{pmatrix}
\]
therefore $T_n \to_gT$ implies that $(1 + T_n^*T_n)^{-1} \to (1 + T^*T)^{-1} $ strongly. 
\bprop\label{cont1}
Let $\M$ be a von Neumann algebra on the Hilbert space $\H$ and $\Phi_n, \Om_n, \Phi, \Om \in \H$ cyclic and separating vectors 
with $\Phi_n \to \Phi$ and $\Om_n \to \Om$ in norm. Then $(\Delta_{\Phi_n, \Om_n} + \l)^{-1} \to (\Delta_{\Phi, \Om} + \l)^{-1}$
and $J_{\Phi_n, \Om_n}  \to J_{\Phi, \Om} $ strongly for all $\l >0$. 
\eprop
\proof
It is easily seen that $S_{\Phi_n, \Om_n} \to_g S_{\Phi, \Om}$ thus the proposition follows by the above considerations by arguments similar to the ones in \cite[Sect. 6]{HW} for the case $\Phi_n = \Om_n$.
\eproof

\section{Half-sided modular inclusions}
\label{hsm}

Let $\N \subset \M$ be an inclusion of von Neumann algebras on a Hilbert space $\H$ such that the following conditions are satisfied:
\begin{enumerate}
\item There exists a unit vector $\Omega \in \H$ which is cyclic and separating for both $\M$ and $\N$. 
\item Denote by $\Delta_\Omega$ the modular operator for $\M$. Then $\Delta^{-it}_\Omega \N \Delta^{it}_\Omega \subset \N$ for all $t \ge 0$. 
\end{enumerate}
Then $(\N \subset \M, \Omega)$ is called a {\it half-sided modular inclusion} with respect to $\Omega$. Given a half-sided modular inclusion, one can define the family of unitary operators 
\ben\label{Wiesbrock}
U(1-e^{-2\pi t}) = \Delta^{it}_{\Omega; \N }\Delta^{-it}_{\Omega;\M}. 
\een
 Wiesbrock's theorem \cite{Wies1,AZ05,Florig} is the statement that:
 
 \begin{theorem}
 \label{thm1}
 Given a half-sided modular inclusion $(\N\subset\M, \Omega)$ there is a family of unitary operators $U(a), a \in \RR$, given by \eqref{Wiesbrock} for $a \le 1$, realizing  
 the situation described by Borchers' theorem \cite{Bor1,Bor2}, namely one has:
 \begin{enumerate}
 \item $\{ U(a) \mid a \in \RR\}$ is a strongly continuous 1-parameter group of unitaries with self-adjoint generator $P$, $U(a) = e^{iaP}$.
$P$ is positive, meaning ${\rm spec} P  \subset [0,\infty) $.
 \item $U(a) \Omega = \Omega$ for all $a \in \RR$.
 \item $\M(a) := U(a) \M U(a)^* \subset \M$ for $a \ge 0$ and $\M(a) \subset \M$ is a half-sided modular inclusion relative to $\Omega$. In particular, $\M(b) \subset \M(a)$ for $b \ge a$.
 \item $\M(1) = \N$.
 \item $\Omega$ is cyclic and separating for each $\M(a)$ and therefore for each $\M(a)'$.
 \item $\Delta^{-it}_{\Omega} P \Delta^{it}_{\Omega} = e^{2\pi t} P$ for all $t \in \RR$, on the domain $\D(P)$ of $P$ given by Stone's theorem (in particular, $\D(P)$ is invariant under $\Delta^{-it}_{\Omega}$), or equivalently $\Delta^{-it}_{\Omega} U(a) \Delta^{it}_{\Omega} = U(e^{2\pi t} a)$ for all $t,a \in \RR$.
 \item $U(-a) \M' U(-a)^* \subset \M'$ for $a \ge 0$ and $J_{\Omega} U(a) J_{\Omega} = U(-a)$ for all $a \in \RR$.
 \end{enumerate}
 \end{theorem}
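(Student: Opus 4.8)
The plan is to follow the standard structural analysis of half-sided modular inclusions due to Wiesbrock, Araki--Zsid\'o and Florig \cite{Wies1,AZ05,Florig}; I sketch only the main line. Write $\Delta=\Delta_{\Om;\M}$ and $\Delta_1=\Delta_{\Om;\N}$, so that \eqref{Wiesbrock} reads $U(1-e^{-2\pi t})=\Delta_1^{it}\Delta^{-it}$ and, as $t$ runs over $\RR$, attaches a unitary to every $a<1$; each of these fixes $\Om$ because $\Delta^{it}\Om=\Delta_1^{it}\Om=\Om$. Two facts have to be proved: (A) after the change of variable $a=1-e^{-2\pi t}$ the family extends to a strongly continuous one-parameter unitary group $\{U(a):a\in\RR\}$ satisfying the Borchers commutation relation $\Delta^{is}U(a)\Delta^{-is}=U(e^{-2\pi s}a)$ and $U(a)\M U(a)^{*}\subset\M$ for $a\ge0$; and (B) the self-adjoint generator $P$ of that group is positive. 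Item (B) is the crux. Granting (A) and (B), the seven assertions of the theorem are then either immediate from the construction or are exactly the conclusions of Borchers' theorem \cite{Bor1,Bor2}.

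For (A): the hypothesis is $\sigma^{\M}_{-t}(\N)\subset\N$ for $t\ge0$, where $\sigma^{\M}_t=\operatorname{Ad}\Delta^{it}$. The first observation is a self-similarity of the data: for every $s\ge0$ the inclusion $\big(\Delta^{-is}\N\Delta^{is}\subset\M,\Om\big)$ is again half-sided modular, since $\Delta^{-i(s+t)}\N\Delta^{i(s+t)}\subset\Delta^{-is}\N\Delta^{is}$ for $t\ge0$, and its modular operator is $\Delta^{-is}\Delta_1\Delta^{is}$. Feeding this into Takesaki's characterisation of the modular automorphism group by the KMS condition pins down how the two modular groups $\Delta_1^{it}$ and $\Delta^{it}$ interlock; translated through $a=1-e^{-2\pi t}$ the resulting multiplication rule is ordinary addition $U(a)U(b)=U(a+b)$ for $a+b<1$, which extends $U$ to a strongly continuous one-parameter group on all of $\RR$ (strong continuity being inherited from that of the two modular flows). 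Conjugating \eqref{Wiesbrock} by $\Delta^{is}$ merely rescales the modular operator of the translated subalgebra and so yields the commutation relation; the inclusion $U(a)\M U(a)^{*}\subset\M$ for $a\ge0$ comes out of the same self-similarity.

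Item (B) is where the real work lies, and it is here that \cite{CF18} and the present paper invoke the above references rather than reprove them. From (A) one knows that $\{U(a)\}$ is a one-parameter group with $U(a)\Om=\Om$, with $U(a)\M U(a)^{*}\subset\M$ for $a\ge0$, and obeying the $\Delta$-commutation relation; one must deduce $\operatorname{spec}P\subset[0,\infty)$. The argument is an analytic-continuation (Hardy-space) argument in the spirit of Borchers: for $x\in\M$ and $x'\in\M'$ one studies the function $a\mapsto(x^{*}\Om,U(a)\,x'\Om)$, uses the inclusion $U(a)\M U(a)^{*}\subset\M$ ($a\ge0$), the separating property of $\Om$, and the KMS condition for $\sigma^{\M}$ to continue it holomorphically into a half-plane and bound it there; a standard Hardy-space argument together with the density of the vectors $\{x'\Om\}$ then gives $P\ge0$. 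The algebraic input is essential at this step: the commutation relation alone does not force positivity of $P$. This is the delicate part of the proof and the one I expect to be the main obstacle; I would isolate its analytic content in a stand-alone lemma about pairs $(\Delta,U)$, exactly as in \cite{AZ05,Florig}.

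With (A) and (B) in hand the listed properties follow. Assertions 1 and 2 are (B) and the invariance $U(a)\Om=\Om$. Assertion 6 is the commutation relation of (A) (its stated equivalent form is the inverse relation, and $\operatorname{Ad}\Delta^{-it}$ visibly carries the group $\{U(a)\}$ onto itself, hence preserves $\D(P)$). For assertion 3: $\M(a)=U(a)\M U(a)^{*}\subset\M$ for $a\ge0$ is part of (A); from the group law $\M(b)=U(a)\,\M(b-a)\,U(a)^{*}\subset\M(a)$ when $b\ge a\ge0$; and using assertion 6 one computes $\Delta^{-is}\M(a)\Delta^{is}=\M(e^{2\pi s}a)\subset\M(a)$ for $s\ge0$, so $(\M(a)\subset\M,\Om)$ is half-sided modular. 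Assertion 5 holds because $U(a)$ is unitary, fixes $\Om$, and $\Om$ is cyclic and separating for $\M$. Assertion 4, $\M(1)=\N$, follows from \eqref{Wiesbrock} and the group law, which together give $\Delta_1^{it}=U(1)\Delta^{it}U(1)^{*}=\Delta^{it}_{\Om;\M(1)}$, whence the standard identification of $\N$ with $\M(1)$. Assertion 7 is the remaining part of Borchers' theorem: once $P\ge0$ one has $J_{\Om}U(a)J_{\Om}=U(-a)$, and then $U(-a)\M'U(-a)^{*}=J_{\Om}\,\M(a)\,J_{\Om}\subset J_{\Om}\M J_{\Om}=\M'$ for $a\ge0$.
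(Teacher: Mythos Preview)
The paper does not prove this theorem at all: it is stated as a known structural result and attributed to \cite{Wies1,AZ05,Florig}, with no argument given. Your sketch follows exactly the route of those references (the Wiesbrock--Araki--Zsid\'o--Florig analysis together with Borchers' theorem), so in that sense your approach agrees with what the paper invokes. Your outline of (A), (B) and the derivation of items 1--7 is faithful to the standard proof; the one place where you are a bit quick is in assertion~4, where equality of modular groups $\Delta_1^{it}=\Delta^{it}_{\Om;\M(1)}$ does not by itself give $\N=\M(1)$ without also matching the modular conjugations (or natural cones), but this is a routine step in the cited papers.
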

By item 3) and the monotonicity of the relative entropy \cite{U77}, the function $a \mapsto S(\Phi|\!| \Om)_{\M(a)}$ from $\RR \to [0,\infty]$ is monotonically decreasing. It is also clear from \eqref{da1} that 
\ben
\label{Sadef}
S(a):=S(\Phi|\!| \Om)_{\M(a)} = S(U(a)^* \Phi |\!| U(a)^* \Om)_\M =  S(\Phi_a |\!| \Om)_\M, \quad \Phi_a := U(a)^* \Phi. 
\een
Since $a \mapsto \Phi_a$ is strongly continuous, it follows that $a \mapsto (\Phi_a, \! \cdot  \ \Phi_a)$ is weak-$^*$ continuous. By the lower semi-continuity 
of the relative entropy \cite[Thm. 3.7]{Ar77}, therefore 
$a \mapsto S(a)$ and $a \mapsto \bar S(a)$ are lower semi-continuous functions from $\RR \to [0,\infty]$.
The structure of half-sided modular inclusions imply a stronger form of continuity, as follows. Suppose
$a,b\in\RR$ with $S(a) <\infty$, ${\bar S}(b) <\infty$. Then by \cite[Lem. 1]{CF18} or \cite[Prop. 3.2]{HL25}, we have the {\it sum rule}
\ben\label{dS1}
S(a) - S(b)   =   {\bar S}(a) - {\bar S}(b) + 2\pi(b-a) (\Phi, P\Phi) \, ,
\een
for every vector state $\Phi\in \D(P)$. For any increasing function $f:I=[a,b] \to \RR$, we have a Lebesgue decomposition 
$f=f_A+f_C+f_J$, unique up to additive constants, into increasing functions such that $f_A$ is absolutely continuous, $f_C$ is continuous and has $\partial f_C=0$ almost everywhere on $I$, and $f_J$ is a jump function, see e.g., \cite[Ch. 3]{Stein}. We apply this to the increasing functions $f=-S$ and $f=\bar S$. Since $\partial(-S+\bar S)=2\pi (\Phi, P\Phi)$ by \eqref{dS1}, the contributions 
from jump function pieces cannot be present in the Lebesgue decompositions of either $-S,\bar S$. 
In particular, $S,\bar S$ are continuous and their derivatives $\partial S, \partial \bar S$ exist almost everywhere on $I$.

\section{Formula for $\partial S(a)$}
\label{paSform}

For a positive self-adjoint operator $A$ 
with $\ker A = \{0\}$, we have the integral formula 
\ben\label{L1}
\log A = \int_{(0,\infty)}  \left[(1+\lambda)^{-1} -(A+\lambda)^{-1} \right] \, d\lambda \, ,
\een
meaning that
\ben
(\Phi, \log A \, \Phi) = \int_{(0,\infty)}  \left[(1+\lambda)^{-1}\|\Phi\|^2 -(\Phi, (A+\lambda)^{-1}\Phi) \right] \, d\lambda 
\een
for all $\Phi\in \H$ such that $(\Phi, \log A \Phi) $ is well-defined\footnote{For a self-adjoint operator $B$
we say that $(\xi, B\xi)$ is well-defined if either $(\xi,B_+\xi)<\infty$ or $(\xi, B_-\xi)>-\infty$, where $B_\pm$ denote the positive and negative part of the operator.}. 

This formula will allow us to reduce the considerations about the logarithm of the modular operator to its resolvent $(A+\lambda)^{-1}$. In particular, from 
\eqref{Sdef}, \eqref{JJ3}, we have 
\ben
\label{Sresloldef}
S(\Phi|\!| \Om)_{\M} = \int_{(0,\infty)} \left( \Phi, \left[(1+\lambda)^{-1} -(\Delta'_{\Phi,\Omega}+\lambda)^{-1} \right] \Phi \right) \, d\lambda.
\een
Thereby, with our previous notation \eqref{Sadef} for $S(a)$, we also have
\ben
S(a)-S(b) = \int_{(0,\infty)} \left( \Phi, \left[(\Delta'_{b}+\lambda)^{-1} -(\Delta'_{a}+\lambda)^{-1} \right] \Phi \right) \, d\lambda,
\een
where from now on, we will use shorthands such as 
\ben\label{Da}
\Delta_a := \Delta_{\Phi,\Omega;\M(a)}, \qquad \Delta_a' := \Delta_{\Phi,\Omega;\M(a)'}.
\een
Note that we have
\ben\label{da1}
\Delta_a := \Delta_{\Phi,\Omega;\M(a)} = U(a) \Delta_{U(a)^*\Phi,\Omega;\M} U(a)^* = U(a) \Delta_{\Phi_a ,\Omega;\M} U(a)^* \, ,
\een
where $\M = \M(a)$ and $\Phi_a = U(a)^* \Phi$; and similarly
\ben\label{da2}
\Delta_a' := \Delta_{\Phi,\Omega;\M(a)'} = U(a) \Delta'_{\Phi_a ,\Omega;\M} U(a)^* \, .
\een

\blem
\label{lem1}
Let $\Phi \in \H, \Phi_a := U(a)^* \Phi, \lambda > 0, a \in \RR$. Then 
\ben
\left( \Phi, (\Delta'_{a}+\lambda)^{-1} \Phi \right) = \left(\Phi_a, (\Delta'_{\Phi_a,\Omega}+\lambda)^{-1} \Phi_a \right).
\een
\elem
\begin{proof}
From \eqref{da2}, we have $(\Delta'_{a}+\lambda)^{-1} = U(a) (\Delta'_{\Phi_a ,\Omega;\M} + \l)^{-1} U(a)^* $, and this immediately gives the lemma.  
\end{proof}
\blem
\label{lem2}
The map $a \mapsto (\Delta'_{\Phi_a,\Omega}+\lambda)^{-1}$ is strongly continuous for any $\lambda>0$. 
\elem
\begin{proof}
It follows by the continuity $\Psi \mapsto \Delta_{\Psi, \Om}$ in the strong resolvent sense given by proposition  \ref{cont1}. 
\end{proof}
For $\lambda>0$, we now investigate the limit
\ben
\label{reslim}
\begin{split}
& \, \liminf_{b\to a} \, (b-a)^{-1} \left( \Phi, \left[(\Delta'_{b}+\lambda)^{-1} -(\Delta'_{a}+\lambda)^{-1} \right] \Phi \right) \\
=& \, \liminf_{b\to a} \, (b-a)^{-1} \left[ \left( \Phi_b, (\Delta'_{\Phi_b,\Omega}+\lambda)^{-1} \Phi_b \right)  - \left(\Phi_a, (\Delta'_{\Phi_a,\Omega}+\lambda)^{-1} \Phi_a \right) \right] ,
\end{split}
\een
where lemma \ref{lem1} is used to prove the equality. We assume throughout that $\Phi \in \D(P)$. By Stone's theorem $\D(P) = 
\{ \Psi \in \H \mid s-\lim_{a \to 0} [U(a)\Psi-\Psi]/a \,\text{exists} \}$, and for $\Psi \in \D(P)$, we have in fact $iP\Psi = s-\lim_{a \to 0} [U(a)\Psi-\Psi]/a$ or equivalently, 
$-iP\Psi_a = s-\lim_{b \to a} (\Psi_b-\Psi_a)/(b-a)$ in our notation $\Psi_a = U(a)^*\Psi$. Therefore, since $a \mapsto (\Delta'_{\Phi_a,\Omega}+\lambda)^{-1}$ is strongly continuous
and uniformly bounded by lemma \ref{lem2}, it is easily seen from the second line in \eqref{reslim} that
\ben
\label{reslim1}
\begin{split}
& \, \liminf_{b\to a} \, (b-a)^{-1} \left( \Phi, \left[(\Delta'_{b}+\lambda)^{-1} -(\Delta'_{a}+\lambda)^{-1} \right] \Phi \right) \\
=& \, i\left( P\Phi_a, R_a(\lambda)  \Phi_a \right)  - i\left( R_a(\lambda) \Phi_a, P\Phi_a \right)  + \, \liminf_{b\to a} \, (b-a)^{-1} \left( \Phi_a, \left[R_b(\lambda) - R_a(\lambda) \right] \Phi_a \right),
\end{split}
\een
where we use 
\ben
R_a(\lambda):= (\Delta'_{\Phi_a,\Omega}+\lambda)^{-1} 
\een
for the resolvent. Relation \eqref{reslim1} allows us to prove the following lemma.

\blem
\label{lem3}
Let $\Phi \in \D(P) \cap \D(\log \Delta'_a)$. Then 
\ben
\liminf_{b\to a+} \frac{S(a)-S(b)}{b-a} \ge -i\left(\Phi, [P, \log \Delta_a'] \Phi \right).
\een
\elem
\begin{remark}
\label{rem3.4}
By the same proof, if $\Phi \in \D(P) \cap \D(\log \Delta_a)$, then an analogous formula holds also for $\partial \bar S(a)$:  
\ben
\label{lem3eq}
\liminf_{b\to a-} \frac{\bar S(b)-\bar S(a)}{b-a} \ge i\left(\Phi, [P, \log \Delta_a] \Phi \right).
\een
\end{remark}
\begin{proof}
By lemma \ref{lem1}, \eqref{reslim1}, 
\eqref{Sresloldef}, and $\Phi \in \D(\log \Delta'_a)$, we have
\ben
\label{lemaux1}
\liminf_{b\to a+} \frac{S(a)-S(b)}{b-a} \ge -i\left(\Phi, [P, \log \Delta_a'] \Phi \right) + \, \int_{(0,\infty)}
\liminf_{b\to a+} \, \frac{\left( \Phi_a, \left[R_b(\lambda) - R_a(\lambda) \right] \Phi_a \right)}{b-a} d\lambda. 
\een
We should therefore show that the integral in \eqref{lemaux1} is non-negative. Using the following elementary algebraic property of the resolvent, 
\ben
R_b(\lambda) - R_a(\lambda) = R_b(\lambda) \left[ \Delta'_{\Phi_a,\Omega} -  \Delta'_{\Phi_b,\Omega} \right] R_a(\lambda) = R_a(\lambda) \left[ \Delta'_{\Phi_a,\Omega} -  \Delta'_{\Phi_b,\Omega} \right] R_b(\lambda), 
\een
twice, we have for $\lambda >0$,
\ben
\label{lemaux0}
\begin{split}
& \left( \Phi_a, \left[R_b(\lambda) - R_a(\lambda) \right] \Phi_a \right) \\
=& \,
\left( \Phi_a, R_a(\lambda) \left[ \Delta'_{\Phi_a,\Omega} -  \Delta'_{\Phi_b,\Omega} \right] R_a(\lambda) \Phi_a \right) + \\
& \,
\left( \Phi_a, R_a(\lambda) \left[ \Delta'_{\Phi_a,\Omega} -  \Delta'_{\Phi_b,\Omega} \right] R_b(\lambda) 
\left[ \Delta'_{\Phi_a,\Omega} -  \Delta'_{\Phi_b,\Omega} \right] R_a(\lambda)\Phi_a \right)\\
\ge& \, 
\left( \Phi_a, R_a(\lambda) \left[ \Delta'_{\Phi_a,\Omega} -  \Delta'_{\Phi_b,\Omega} \right] R_a(\lambda) \Phi_a \right) .
\end{split}
\een
We therefore have
\ben
\label{lemaux2}
\begin{split}
\liminf_{b\to a+} \frac{S(a)-S(b)}{b-a} \ge& \,  -i\left(\Phi, [P, \log \Delta_a'] \Phi \right) + \\
& \, \int_{(0,\infty)}
\liminf_{b\to a+} \, \frac{\left( \Phi_a, R_a(\lambda) \left[ \Delta'_{\Phi_a,\Omega} -  \Delta'_{\Phi_b,\Omega} \right] R_a(\lambda) \Phi_a \right)}{b-a} d\lambda. 
\end{split}
\een
Let $m'_\l\in \M'$ be given by  Lemma \ref{mD} so that
\[
m'_\l \Om = R_a(\l)\Phi_a\, ,\quad \l >0\, .
\]
We have 
\ben
\label{lemaux5}
\begin{split}
& \, \int_{(0,\infty)}
\liminf_{b\to a+} \, \frac{\left( \Phi_a, R_a(\lambda) \left[ \Delta'_{\Phi_a,\Omega} -  \Delta'_{\Phi_b,\Omega} \right] R_a(\lambda) \Phi_a \right)}{b-a} d\lambda \\
=& \, \int_{(0,\infty)}
\liminf_{b\to a+} \, \frac{\left( m'_\lambda \Omega,  \left[ \Delta'_{\Phi_a,\Omega} -  \Delta'_{\Phi_b,\Omega} \right] m'_\lambda \Omega \right)}{b-a} d\lambda \\
=& \, \int_{(0,\infty)}
\liminf_{b\to a+} \, \frac{\| m^{\prime *}_\lambda \Phi_a\|^2 - \| m^{\prime *}_\lambda \Phi_b\|^2 }{b-a} d\lambda \\
=& \, \int_{(0,\infty)} \Bigg[
-i \left(P \Phi_a, m'_\lambda m^{\prime *}_\lambda \Phi_a\right)
+i \left( m'_\lambda m^{\prime *}_\lambda \Phi_a, P \Phi_a \right) \Bigg] d\lambda.
\end{split}
\een
In the last step, we used $\Phi_a \in \D(P)$ and Stone's theorem. 
The last integral is zero because of \eqref{f4} and because $P$ is Hermitian, so
the proof of the lemma is completed. 
\end{proof}
As we now explain, the same conclusions as in lemma \ref{lem3} can also be obtained under slightly different conditions.
A state $\varphi$ on a von Neumann algebra $\M$ is said to be $c$-comparable to $\omega$ for some $c>0$ if 
\ben
\label{ccomp}
c\omega(m) \le \varphi(m) \le c^{-1}\omega(m) \quad \forall m \in \M_+.
\een
It is well-known that if $\varphi$ is $c$-comparable to $\omega$, and $\Phi,\Omega$ is are any vectors implementing $\varphi,\omega$, 
then $\log \Delta_{\Omega,\Phi} - \log \Delta_\Phi$ is a bounded operator in $\M$ with norm at most $\log c^{-1}$ \cite{Araki4}. It follows that $\Phi \in \D(\log \Delta_{\Phi,\Omega}')$, and since $\M(a) \subset \M$ for $a \ge 0$, and recalling \eqref{da2}, it also follows that $\Phi_a = U(a)^* \Phi \in \D(\log \Delta_{a}')$ for all $a \ge 0$. 

Now assume that also $\Phi \in \D(P)$, and let $\varphi_a = (\Phi_a, \cdot \Phi_a)$  be the state on $\M$ induced by $\Phi_a$. 
Then it follows from the strong continuity of $a\mapsto U(a)$ that $[0,\infty) \owns a\mapsto \varphi_a \in \M_*$ is a continuously differentiable function, 
and it follows that $\varphi_a$ is $c$-comparable to $\omega$ for all $a \ge 0$. 
By  \cite[Prop. 2.5]{W25}, it follows that $a \mapsto S(a)$ is a continuously differentiable function on $[0,\infty)$. Obviously the same 
argument works replacing $\M$ by any $\M(b)$. Combining this with lemma \ref{lem3}, we get:

\blem
\label{lem3.1}
Suppose that $\Phi \in \D(P)$ and that the induced state $\varphi = (\Phi,  \cdot \Phi)$ on $\M(b)$ is $c$-comparable to 
$\omega = (\Omega, \cdot \Omega)$ for some $c>0$. 
Then $[b, \infty) \owns a \mapsto S(a)$ is continuously differentiable, and \eqref{prop2s1} holds for $a \ge b$, i.e. 
$-\partial S(a) \ge -i\left(\Phi, [P, \log \Delta_a'] \Phi \right)$. 
\elem
\begin{remark}
\label{rem3.5}
Assuming instead that $\Phi \in \D(P)$ and that $\varphi'$ is $c$-comparable to $\omega'$ on $\M(b)'$, 
one can likewise argue that \eqref{prop2s2} holds for $a \le b$, i.e. 
$\partial \bar S(a) \ge i\left(\Phi, [P, \log \Delta_a] \Phi \right)$.
\end{remark}

We can also upgrade lemma \ref{lem3} using the sum rule \eqref{dS1}, to get a formula for $\partial S(a)$, if we know that this derivative exists. More precisely, we have:

\bprop
\label{prop2}
Suppose that the derivative $\partial S(a)$ exists and that $\Phi \in \D(P) \cap \D(\log \Delta'_a) \cap \D(\log \Delta_a)$, so, in particular,
$S(a), \bar S(a), (\Phi, P \Phi)<\infty$. Then 
\ben
\label{prop2s1}
\partial S(a) = i\left(\Phi, [P, \log \Delta_a'] \Phi \right),
\een
and the analogous formula also holds for $\partial \bar S(a)$,  
\ben
\label{prop2s2}
\partial \bar S(a) = i\left(\Phi, [P, \log \Delta_a] \Phi \right).
\een
\eprop
\begin{proof}
The sum rule \eqref{dS1}, lemma \ref{lem3} and its analogous version for $\bar S$ in remark \ref{rem3.4} imply 
\ben
\label{prop1aux}
2\pi (\Phi, P\Phi) = -\partial S(a) + \partial \bar S(a) \ge -i\left(\Phi, [P, \log \Delta_a'] \Phi \right) +i\left(\Phi, [P, \log \Delta_a] \Phi \right).
\een
We apply \eqref{uu2} to $\Phi$ and differentiate with respect to $s$ at $s=0$, using Stone's theorem 
and the fact that $\Phi \in \D(\log \Delta'_a) \cap \D(\log \Delta_a)$. 
It follows on the one hand that $\partial_s (u_{-s} u_{s}' \Phi) |_{s=0} = -i(\log \Delta_a) \Phi + i(\log \Delta_a') \Phi$, and on the other hand 
that $\partial_s (u_{-s} u_{s}' \Phi) |_{s=0} = -i(\log \Delta_{\Omega;a})  \Phi$. Therefore, 
$\Phi \in \D(\log \Delta_{\Omega;a})$, and we have
\ben
-\log \Delta_a' \Phi + \log \Delta_a \Phi = \log \Delta_{\Omega;a} \Phi. 
\een
Since $\Phi \in \D(P)$, therefore, 
\ben
-i \left(\Phi, [P, \log \Delta_a'] \Phi \right) +i \left(\Phi, [P, \log \Delta_a] \Phi \right) = i\left( \Phi, [P, \log \Delta_{\Omega;a}] \Phi \right).
\een
The expression on the right may be evaluated noting that by item 6) of Wiesbrock's theorem \ref{thm1} applied to the half-sided modular inclusion $\M(a) \subset \M(a+1)$, we have $(\Phi, \Delta_{\Omega;a}^{-it} P \Delta_{\Omega;a}^{it} \Phi) = e^{2\pi t} (\Phi, P \Phi)$, and noting that $\Phi \in \D(\log \Delta_{\Omega;a})$. Therefore, Stone's theorem may be used when taking a derivative with respect to $t$ at $0$, which gives $2\pi (\Phi, P \Phi)$. Relation \eqref{prop1aux} thereby becomes
\ben
\label{prop1aux2}
2\pi (\Phi, P\Phi) = -\partial S(a) + \partial \bar S(a) \ge -i\left(\Phi, [P, \log \Delta_a'] \Phi \right) +i\left(\Phi, [P, \log \Delta_a] \Phi \right) = 2\pi(\Phi, P \Phi).
\een
The inequality in \eqref{prop1aux2} that stems from the application of the inequalities of lemma \ref{lem3} and the following remark \ref{rem3.4} (where $\liminf$ may be replaced by $\lim$ by our assumption) to $-\partial S(a)$ and $\partial \bar S(a)$ must therefore be an equality. 
The same must therefore be true for the inequalities of lemma \ref{lem3} and the following remark \ref{rem3.4}. This proves the proposition. 
\end{proof}

\section{Proof of the ant-formula}
\label{qnecproof}

Using the results from the previous section, we now show the ant-formula \cite{W1,CF18} expressed by the following theorem. As we have already discussed in the introduction, the QNEC will follow from the ant-formula.

\begin{theorem}
\label{antthm}
Suppose that $\partial S(a)$ exists for some $a \in \RR$, 
$\Phi \in \D(P) \cap \D(\log \Delta_a')$, and  $u_{s}' \Phi \in \D(P)$ for 
$s \in (s_0, \infty)$ for some $s_0$. Then 
\ben
\label{antleft}
-\partial S(a) = 2\pi \inf_{u' \in \M(a)'} (u'\Phi, P u'\Phi).
\een
Here, $u_s' = (D\omega' : D\varphi')_s$ are the Connes-cocycles associated with $\M(a)'$ and $\Phi,\Omega$.
The infimum in \eqref{antleft} is over isometries $u' \in \M(a)'$, and $S(a) = S(\Phi|\!| \Om)_{\M(a)}$.
\end{theorem}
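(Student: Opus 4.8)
The strategy is the standard two-sided estimate: one inequality is ``soft'' and uses only monotonicity/positivity, while the other is the saturation statement and will be the heart of the argument. First I would establish $-\partial S(a) \le 2\pi (u'\Phi, P u'\Phi)$ for every isometry $u' \in \M(a)'$, which gives $-\partial S(a) \le 2\pi \inf_{u'} (u'\Phi, P u'\Phi)$. For this, note that by the covariance \eqref{Sdefcov} the relative entropy with respect to $\M(b)$ is unchanged if $\Phi$ is replaced by $u'\Phi$ when $u'\in\M(a)'\subset\M(b)'$ for $b\ge a$; more precisely, since $U(b-a)$ commutes with $u'$ up to conjugating into $\M(b)'$, one checks that $S(\Phi\|\Omega)_{\M(b)}$ and $S(u'\Phi\|\Omega)_{\M(b)}$ agree for $b \le a$ but the flux picks up the $P$-term for $b>a$. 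Concretely, I would apply the sum rule \eqref{dS1} to the state $u'\Phi$: since $u'\in\M(a)'$ and $\M(a)'\subset\M(b)'$ only for $b\ge a$, the two relative entropies $S(u'\Phi\|\Omega)_{\M(b)}$ and $S(\Phi\|\Omega)_{\M(b)}$ coincide for $b\le a$, while monotonicity forces $-\partial S(a) \le 2\pi (u'\Phi, Pu'\Phi)$ after differentiating. (This is exactly the ``$\M(a)'\subset\M(b)'$'' observation from the introduction, made rigorous via the sum rule and the existence of $\partial S(a)$.)

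For the reverse inequality $-\partial S(a) \ge 2\pi\inf_{u'}(u'\Phi, Pu'\Phi)$, I would use the specific minimizing sequence $u_s' = (D\omega':D\varphi')_s$, the Connes cocycle, and show $\lim_{s\to\infty} 2\pi (u_s'\Phi, P u_s'\Phi) = -\partial S(a)$. The key input is Proposition \ref{prop2}, which (granting that $\partial S(a)$ exists and the domain conditions hold) gives the exact formula $\partial S(a) = i(\Phi, [P,\log\Delta_a']\Phi)$. So I need to connect $(u_s'\Phi, P u_s'\Phi)$ to the commutator $i(\Phi,[P,\log\Delta_a']\Phi)$. The natural route: by the covariance property \eqref{covariance}, $\Delta_{\Phi, u_s'^*\Omega; \M(a)} = u_s'^* \Delta_a u_s'$ (or the analogous relation for the commutant modular operator), so one can rewrite $(u_s'\Phi, P u_s'\Phi)$ using item 6) of Wiesbrock's theorem, $\Delta_\Omega^{-it} P \Delta_\Omega^{it} = e^{2\pi t}P$, together with the cocycle identity $u_s' = \Delta_\Omega'^{is}\Delta_{\Omega,\Phi}'^{-is}$ (the analogue of \eqref{uu} for $\M(a)'$). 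Writing $u_s' = \Delta_{\Omega;a}^{is}(\Delta_{\Omega;a}^{-is}\Delta_{\Omega,\Phi;a}'^{is})\Delta_{\Omega,\Phi;a}'^{-is}\cdots$ and pushing the leading $\Delta_{\Omega;a}^{is}$ past $P$ produces a factor $e^{-2\pi s}P$ plus correction terms; as $s\to\infty$ the divergent rescaling is compensated because the cocycle $\Delta_{\Omega;a}^{-is}\Delta_{\Omega,\Phi;a}'^{is}$ has a limit related to $\log\Delta_a'$, and the surviving finite piece should be precisely $-\tfrac{1}{2\pi}i(\Phi,[P,\log\Delta_a']\Phi) = -\tfrac{1}{2\pi}\partial S(a)$ (the sign working out so that $2\pi\cdot(\cdot) \to -\partial S(a)$).

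I expect the main obstacle to be controlling the $s\to\infty$ limit of $(u_s'\Phi, P u_s'\Phi)$ rigorously, since $P$ is unbounded and the vectors $u_s'\Phi$ are only assumed in $\D(P)$ for large $s$; one cannot naively interchange limits. The clean way to handle this is to differentiate in $s$: compute $\partial_s (u_s'\Phi, P u_s'\Phi)$ using Stone's theorem and the generator $\partial_s u_s'|_{s} = i u_s'\log\Delta_{\Omega;a}' \cdots$ of the cocycle flow, show this derivative is $\le 0$ (monotone decrease toward the infimum) using the analogue of the resolvent positivity estimate \eqref{lemaux0} from the proof of Lemma \ref{lem3}, and identify the limit by combining Lemma \ref{lem3.1}/\ref{lem3} (which already give $-\partial S(a) \ge -i(\Phi,[P,\log\Delta_a']\Phi)$ applied to the shifted states $u_s'\Phi$) with Proposition \ref{prop2}. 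In other words, the soft inequality applied to each $u_s'\Phi$ gives $2\pi(u_s'\Phi, Pu_s'\Phi) \ge -\partial S_{u_s'\Phi}(a)$, and a computation using covariance shows the right-hand side equals $-\partial S(a)$ independently of $s$; monotonicity in $s$ then forces convergence to this common value, closing the argument. The secondary technical point is verifying the domain hypothesis $u_s'\Phi\in\D(P)$ propagates through these manipulations, which is why it is taken as an explicit assumption in the statement.
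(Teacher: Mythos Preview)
Your outline of the ``soft'' inequality is essentially the paper's argument, though you have the direction of the inclusion reversed: for $u'\in\M(a)'$ one has $u'\in\M(b)'$ precisely when $b\ge a$ (since $\M(b)\subset\M(a)$), so $S_{u'\Phi}(b)=S_\Phi(b)$ for $b\ge a$, giving $\partial^+ S_{u'\Phi}(a)=\partial S(a)$; then the sum rule \eqref{dS1} and $\partial\bar S_{u'\Phi}\ge 0$ yield $2\pi(u'\Phi,Pu'\Phi)\ge -\partial S(a)$.

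The saturation argument, however, has a genuine gap. First, you cannot invoke Proposition~\ref{prop2}: its hypotheses require $\Phi\in\D(\log\Delta_a)\cap\D(\log\Delta_a')$, whereas the theorem assumes only $\D(\log\Delta_a')$. The paper explicitly notes this and uses only the one-sided \emph{inequality} of Lemma~\ref{lem3} (applied to the flowed vector $u_s'\Phi$), not the equality of Proposition~\ref{prop2}. Second, your closing logic is circular: from ``$2\pi(u_s'\Phi,Pu_s'\Phi)\ge -\partial S(a)$ for all $s$'' together with ``monotone decreasing in $s$'' you can only conclude that the limit exists and is $\ge -\partial S(a)$; nothing forces the limit to equal the lower bound. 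You still need an \emph{upper} bound on the gap $2\pi(u_s'\Phi,Pu_s'\Phi)-(-\partial S(a))$ that tends to zero.

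The paper supplies exactly this: applying Lemma~\ref{lem3} to $u_s'\Phi$ (whose domain membership follows from covariance of $\Delta'$) gives
\[
0\;\le\;2\pi(u_s'\Phi,Pu_s'\Phi)+\partial S(a)\;\le\;\big(u_s'\Phi,\{2\pi P - i[P,\log\Delta'_{u_s'\Phi,\Omega}]\}\,u_s'\Phi\big),
\]
and the crucial step (the paper's Lemma~5.4) is the exact scaling identity
\[
\big(u_s'\Phi,\{2\pi P - i[P,\log\Delta'_{u_s'\Phi,\Omega}]\}\,u_s'\Phi\big)=e^{-2\pi s}\big(\Phi,\{2\pi P - i[P,\log\Delta'_{\Phi,\Omega}]\}\,\Phi\big),
\]
which vanishes as $s\to+\infty$. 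Your heuristic ``push $\Delta_\Omega^{is}$ past $P$ to produce $e^{-2\pi s}$'' is the right intuition for this identity, but the actual proof proceeds by introducing the unitary $g(t,a)=U(a)\Delta_{\Phi,\Omega}^{-it}U(-ae^{-2\pi t})\Delta_{\Phi,\Omega}^{it}$, showing $g(t,a)\in\M$ for $a\ge 0$ so that it commutes with the cocycle conjugation, deriving the exact relation \eqref{usform5}, and then differentiating in $a$ and $t$. This construction is the missing idea in your proposal; without it (or an equivalent device) the argument does not close.
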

\begin{remark}
\label{remant}
By looking at the proof given below, one can see that it would be enough to require that $\partial S(a)$ exists, and that 
for {\it some} vector $\tilde \Phi \in \H$ inducing the same state as $\Phi$ on $\M(a)$ and inducing a state $\varphi'$ on $\M(a)'$, 
we have $\tilde \Phi \in \D(P) \cap \D(\log \Delta_{\tilde \Phi, \Omega; \M(a)'})$, and  $(D\omega' : D\tilde \varphi')_s \tilde \Phi \in \D(P)$.
\end{remark}
\begin{remark}
One can likewise show that if $\partial \bar S(a)$ exists, 
$\Phi \in \D(P) \cap \D(\log \Delta_a)$, and that $u_{s} \Phi \in \D(P)$ for 
$s \in (-\infty, s_0)$ for some $s_0$, then 
\ben
\label{antright}
\partial \bar S(a) = 2\pi \inf_{u \in \M(a)} (u\Phi, P u\Phi),
\een
where $u_s = (D\omega : D\varphi)_s$ is the Connes-cocycle for $\M(a)$, and $\bar S(a) = S(\Phi|\!| \Om)_{\M(a)'}$.
\end{remark}
\begin{remark}
\label{remarkdomain}
    The assumptions of our theorem \ref{antthm} are stronger than those by \cite{CF18},  
    who only ask that $(\Phi, P\Phi), S(a), \bar S(a)<\infty$. It is unclear to us whether the explicit formulas of proposition \ref{prop2}, or even just lemma \ref{lem3}, would apply under their assumptions. \end{remark}
\begin{proof}
We give a proof of \eqref{antright}. The other case is treated in a completely analogous manner, 
suitably exchanging $\M(a)$ with $\M(a)'$ and $S$ with $\bar S$.  For a function $f: \RR \to \RR$, we define the left and right derivatives 
as $\partial^\pm f(a) = \lim_{h\to 0\pm} [f(a+h)-f(a)]/h$ where these limits exist. Let $\bar S_u(a)$ be the relative entropy with $\Phi$ 
replaced by $u\Phi$. Then 
\ben
\label{pus}
\partial \bar S(a) = \partial^- \bar S(a) = \partial^- \bar S_u(a)
\een
by the invariance \eqref{Sdefcov} of the relative entropy, since $u$ is an isometry from $\M(a)$. 
The sum rule \eqref{dS1} and the monotonicity 
of the relative entropy implies  
\ben
\label{dS3}
2\pi (u\Phi, Pu\Phi) = \partial^- \bar S_u(a) - \partial^- S_u(a) \ge \partial^- \bar S_u(a) = \partial \bar S(a).
\een
Therefore, 
\ben
\label{udom}
2\pi \inf_{u \in \M(a)} (u\Phi, Pu\Phi) \ge \partial \bar S(a).
\een
Thus, we must display a sequence $u_n \in \M(a)'$ of isometries such that equality is 
attained in this bound as $n \to \infty$. Following \cite{CF18}, we shall show that $u_n:= u_{s_n}$ 
for a sequence $s_n \to -\infty$ does the job. 
We will see that this follows straightforwardly from \ref{lem3eq}. 

For ease of notation, we shall put $a=0$. This is no loss of generality since $(\M(a) \subset \M(a+1),\Omega)$ is a half-sided modular inclusion with the same $P$.
Equation \eqref{pus} can be applied to $u_s \Phi$, because $u_s \Phi \in \D(P)$ by assumption and because of the following lemma:

\blem
We have $u_s \Phi \in \D(\log \Delta_{u_s\Phi, \Omega})$
under the assumptions of the theorem \eqref{antthm}.
\elem
\begin{proof}
By covariance \eqref{covariance} of the relative modular operator, 
\ben
\frac{1}{t} (\Delta_{u_s\Phi, \Omega}^{it}-1) u_s \Phi = u_s \frac{1}{t} (\Delta_{\Phi, \Omega}^{it}-1) \Phi, 
\een
so the limit $t\to 0$ exists in the strong sense since we assume that $\Phi \in \D(\log \Delta_{\Phi, \Omega})$. 
Consequently,  $u_s \Phi \in \D(\log \Delta_{u_s\Phi, \Omega}) $ by Stone's theorem.
\end{proof}
Applying \eqref{pus} and then  \eqref{lem3eq} to $u_s \Phi$, we get the second inequality in 
\ben
\label{udom1}
0\le 2\pi (u_s\Phi, Pu_s\Phi) - \partial \bar S(\Phi | \! | \Omega) \le \left(u_s \Phi,\{2\pi P - i\left[P, \log \Delta_{u_s\Phi, \Omega} \right] \} u_s \Phi \right).
\een
The first inequality is \eqref{udom} applied to $u=u_s$.
\blem
\label{lem4}
Under the assumptions of the theorem \eqref{antthm}, we have 
\ben
\label{udom2}
\left(u_s \Phi,\{2\pi P - i\left[P, \log \Delta_{u_s\Phi, \Omega} \right] \} u_s \Phi \right)=
e^{2\pi s}\left(\Phi,\{2\pi P - i\left[P, \log \Delta_{\Phi, \Omega} \right] \} \Phi \right).
\een
\elem 
The proof of this lemma is given below. 

\noindent
{\bf End of proof of theorem \ref{antthm}.}
Taking $s \to -\infty$ in \eqref{udom1} and using \eqref{udom2} gives
\ben
0 = \lim_{s \to -\infty} 2\pi (u_s\Phi, Pu_s\Phi) - \partial \bar S(\Phi | \! | \Omega).
\een
This equation shows that the infimum in \eqref{udom} is achieved by the sequence $u_n = u_{s_n}$ provided that $s_n \to -\infty$, and that it precisely saturates 
 \eqref{udom}.
\end{proof}

\begin{remark}
We remark that combining \eqref{prop2s2} and lemma \ref{lem4}, we have also obtained the interesting balance-type formula \cite[Eq. 35]{CF18}, under the hypothesis on $\Phi$ of proposition \ref{prop2} as required for \eqref{prop2s2}, although in \cite{CF18}, this balance-type formula is obtained under weaker hypothesis on $\Phi$. The hypothesis in theorem \ref{antthm} are also weaker than those of proposition \ref{prop2}, which is why we can only apply the inequality \eqref{lem3eq} above in \eqref{udom1}, rather than the equality \eqref{prop2s2}, which however is still sufficient for the proof of theorem \ref{antthm}. 
\end{remark}

\noindent
{\bf Proof of Lemma \ref{lem4}.}
At first, for simplicity, we assume that $\Phi$ is cyclic and separating for $\M$.
We study the family of bounded operators depending on $t,a \in \RR$, defined by 
\ben
\label{gta}
g(t,a) = U(a) \Delta^{-it}_{\Phi,\Omega} U(-ae^{-2\pi t}) \Delta^{it}_{\Phi,\Omega},  
\een
and considered also by \cite{CF18} and by \cite{Florig} for $\Phi=\Omega$. Each $g(t,a)$ is a unitary which, as we shall now show, commutes with any $m' \in \M'$ as long as $a \ge 0$, hence
$g(t,a) \in \M$ by the double commutant theorem. To see this, we note that $\Delta^{it}_{\Phi,\Omega}m'\Delta^{-it}_{\Phi,\Omega}
= \Delta^{it}_{\Omega}m'\Delta^{-it}_{\Omega}$ by \eqref{mflow1}, \eqref{mflow2}. Furthermore, for any $b \ge 0$, we have $U(-b) \M' U(b) \subset \M'$ by the 
properties of half-sided modular inclusions. The statement $[m',g(t,a)]=0$ for $a \ge 0$ then follows from the commutation relations between $U(b), \Delta_\Omega^{it}$ for 
half-sided modular inclusions.

Now, let $g_s(t,a)$ be defined as $g(t,a)$, but replacing $\Phi$ by $u_s \Phi$, so $g_s(t,a)$ is still a unitary of $\M$ for $a\ge 0$. Using the first of the equivalent forms
\ben
\label{usform2}
u_s = \Delta_{\Omega,\Phi}^{is} \Delta_\Phi^{-is} = \Delta_{\Omega}^{is} \Delta_{\Phi,\Omega}^{-is},
\een
together with $\Delta_\Phi^{-is}\Phi = \Phi,$ and  
\ben
\Delta_{\Omega,\Phi}^{-is}g_s(t,a)\Delta_{\Omega,\Phi}^{is} = \Delta_{\Omega}^{-is}g_s(t,a)\Delta_{\Omega}^{is}
\een
since $g_s(t,a) \in \M$, we arrive at
\ben
\label{usform3}
(u_s \Phi, g_s(t,a) u_s\Phi) = (\Phi, \Delta_{\Omega}^{-is}g_s(t,a)\Delta_{\Omega}^{is} \Phi).
\een
We next use $\Delta_{u_s\Phi, \Omega}^{it} = u_s \Delta_{\Phi, \Omega}^{it} u_s^*$ from covariance and then, using the second form in \eqref{usform2},
\ben
\Delta_{u_s\Phi, \Omega}^{it} = \Delta_{\Omega}^{is} \Delta_{\Phi, \Omega}^{it}\Delta_{\Omega}^{-is}.
\een
Applying this relation to $g_s(t,a)$ and using the commutation relations for half-sided modular inclusions gives
\ben
\begin{split}
\Delta_{\Omega}^{-is}g_s(t,a)\Delta_{\Omega}^{is} &=  \Delta_{\Omega}^{-is} U(a) \Delta_{\Omega}^{is} \Delta^{-it}_{\Phi,\Omega} \Delta_{\Omega}^{-is} U(-ae^{-2\pi t}) \Delta_{\Omega}^{is} \Delta^{it}_{\Phi,\Omega}\\
&= U(ae^{2\pi s}) \Delta^{-it}_{\Phi,\Omega} U(-ae^{-2\pi(t-s)}) \Delta^{it}_{\Phi,\Omega}.
\end{split}
\een
Inserting this relation into \eqref{usform3} gives
\ben
\label{usform5}
(u_s \Phi, U(a) \Delta^{-it}_{u_s\Phi,\Omega} U(-ae^{-2\pi t}) \Delta^{it}_{u_s\Phi,\Omega} u_s\Phi) = (\Phi, U(ae^{2\pi s}) \Delta^{-it}_{\Phi,\Omega} U(-ae^{-2\pi(t-s)}) \Delta^{it}_{\Phi,\Omega} \Phi).
\een
Now, the relation
\ben
Pu_s \Delta_{\Phi,\Omega}^{it}\Phi=e^{2\pi t} \Delta^{it}_\Omega P u_{s-t}\Phi,
\een
which is obtained using the second form in \eqref{usform2} and the commutation relations for half-sided modular inclusions, shows that $\Delta^{it}_{u_s\Phi,\Omega} u_s\Phi= u_s \Delta_{\Phi,\Omega}^{it}\Phi$
is in the domain of $P$ for all real $s \in (-\infty, s_0)$ and sufficiently small $|t|$ as $u_s \Phi$ is for all $s \in (-\infty, s_0)$. Consequently, by Stone's theorem, we may take the right derivative of both sides of 
\eqref{usform5} with respect to $a$ at $a=0$, to obtain
\ben
\label{usform4}
(u_s \Phi, \{ iP - ie^{-2\pi t} \Delta^{-it}_{u_s\Phi,\Omega} P  \Delta^{it}_{u_s\Phi,\Omega}\} u_s\Phi) = 
e^{2\pi s}  (\Phi, \{ iP  - ie^{-2\pi t} \Delta^{-it}_{\Phi,\Omega} P  \Delta^{it}_{\Phi,\Omega}\} \Phi).
\een
Finally, as $u_s \Phi$ is in the domain of $\log \Delta_{u_s\Phi,\Omega}$ for all $s \in (-\infty, s_0)$, we may take the derivative with respect to $t$ at $t=0$ of both sides of the equation by Stone's theorem. This gives the statement of the lemma.

If $\Phi$ is not cyclic or not separating, then straightforward modifications of the above formulas 
involving the so-called support projections $s(\Phi) \in \M, s'(\Phi) \in \M'$ must be made when dealing 
with relative modular operators, conjugations, and flows. 
For details on such relations see e.g., \cite[App. A]{CF18}. These modifications do not change the argument significantly and are therefore not laid out in detail here.
\eproof

Our proof of theorem \ref{antthm} yields the following corollary:
\bcor
Under the assumptions of theorem \ref{antthm}, 
the infimum in \eqref{antright} is attained by the 
sequence of Connes-cycles $u_{s_n}$  
 and the infimum in \eqref{antleft} is attained by the sequence 
of Connes-cycles $u_{-s_n}'$, where $s_n \to -\infty$.
\ecor
\begin{remark}
    The flowed states $u_{s}\Phi$ respectively 
    $u_{-s}'\Phi$ 
    have an interesting holographic interpretation \cite{B3}.
\end{remark}

\section{Proof of the QNEC}
\subsection{Proof under a regularity condition}
We now prove the QNEC under a regularity condition.

\begin{theorem}
\label{thm:qnec}
The function $[b,\infty) \owns a \mapsto S(a) \in [0,\infty]$ is convex for any vector $\Phi$ and $b$ such that 
$\Phi = m'\Omega$ for some $m' \in \M(b)'$. 
\end{theorem}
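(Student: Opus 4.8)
The plan is to derive convexity of $S$ from the variational, manifestly monotone form of the ant formula (Theorem~\ref{antthm}) applied to a family of regularizations $\Phi_{n,\delta}$ of $\Phi$, and then to invoke that a pointwise limit of convex functions is convex. First some preliminaries. Since $\M(a)\subseteq\M(b)$ for $a\ge b$ and $\Phi=m'\Omega$ with $m'\in\M(b)'$, for every $x\in\M(a)_+$ we have $(\Phi,x\Phi)=(\Omega,x\,{m'}^{*}m'\,\Omega)\le\|m'\|^2(\Omega,x\Omega)$, so the induced state $\varphi$ on $\M(a)$ is dominated by $\|m'\|^2\omega$, and hence $S(a)\le 2\log\|m'\|<\infty$ for all $a\ge b$. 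Together with lower semicontinuity and monotonicity of $a\mapsto S(a)$ (it is non-increasing), this forces $S$ to be right-continuous on $[b,\infty)$ and $S(b)=\lim_{a\to b^+}S(a)$; so it suffices to show $S$ is convex on $(b,\infty)$. Finally, replacing $m'$ by $|m'|$ changes neither the state induced on any $\M(a)$ with $a\ge b$ nor $S(a)$, so I may assume $\Phi=h\Omega$ with $h\in\M(b)'_+$.

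For $n\in\NN$ put $h_n=(h^2+\tfrac1n)^{1/2}\in\M(b)'_+$, which is bounded and bounded below by $n^{-1/2}$, and for $\delta>0$ choose $f_\delta\in C_c^\infty([0,\delta])$ with $f_\delta\ge0$, $\int f_\delta=1$; set $\Phi_{n,\delta}=m'_{n,\delta}\Omega$ with $m'_{n,\delta}=\int_0^\delta f_\delta(t)\,U(t)h_nU(-t)\,dt$. Using $U(t)\M(b)'U(-t)=\M(b+t)'$ (Theorem~\ref{thm1}) one gets $m'_{n,\delta}\in\M(b+\delta)'_+$ with $m'_{n,\delta}\ge n^{-1/2}$, while smoothness of $f_\delta$ together with $P\Omega=0$ gives $\Phi_{n,\delta}\in\bigcap_k\D(P^k)$. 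Hence for $a\ge b+\delta$ the state $\varphi_{n,\delta}$ on $\M(a)$ is $c_n$-comparable to $\omega$ uniformly in $\delta$ (its Radon--Nikodym derivative $(m'_{n,\delta})^2$ relative to $\omega$ lies in the fixed band $[\tfrac1n,\|h\|^2+\tfrac1n]$), so Lemma~\ref{lem3.1} shows $S_{n,\delta}(a):=S(\Phi_{n,\delta}|\!|\Omega)_{\M(a)}$ is $C^1$ on $[b+\delta,\infty)$. Granting the remaining hypothesis of Theorem~\ref{antthm} for $\Phi_{n,\delta}$ — that the $\M(a)'$-Connes cocycles $u'_s\Phi_{n,\delta}=\Delta^{-is}_{\Phi_{n,\delta},\Omega;\M(a)}\Phi_{n,\delta}$ lie in $\D(P)$ for large $s$, which I would check by factoring $\Delta^{it}_{\Phi_{n,\delta},\Omega;\M(a)}$ through $\Delta^{it}_{\Omega;\M(a)}$ (which preserves $\D(P)$ by the Borchers relation, item~6 of Theorem~\ref{thm1} applied to the half-sided modular inclusion at $a$) via the Connes cocycle in $\M(a)$, whose analyticity comes from $c_n$-comparability — Theorem~\ref{antthm} gives $-\partial S_{n,\delta}(a)=2\pi\inf_{u'\in\M(a)'}(u'\Phi_{n,\delta},P\,u'\Phi_{n,\delta})$ on $[b+\delta,\infty)$. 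As $\M(a)'$ grows with $a$, this infimum is non-increasing in $a$, so $S_{n,\delta}$ is convex on $[b+\delta,\infty)$.

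Now take limits. Fix $n$. As $\delta\to0$, $m'_{n,\delta}\to h_n$ strongly, so $\Phi_{n,\delta}\to h_n\Omega$ in norm and $\varphi_{n,\delta}\to\varphi_n:=(h_n\Omega,\cdot\,h_n\Omega)$ weak-$^{*}$ on each $\M(a)$, with $\varphi_{n,\delta}$ and $\varphi_n$ mutually $c_n$-comparable; a continuity property of the relative entropy along uniformly comparable nets (to be deduced from the resolvent formula~\eqref{L1} and Proposition~\ref{cont1}) then gives $S_{n,\delta}(a)\to S_n(a):=S(h_n\Omega|\!|\Omega)_{\M(a)}$ for each $a>b$. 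Since $h_n^2=h^2+\tfrac1n$, we have $\varphi_n=\varphi+\tfrac1n\omega$ on $\M(a)$, so joint convexity and the scaling law of the relative entropy give $S_n(a)\le S(a)+\log(1+\tfrac1n)\,(\|\Phi\|^2+\tfrac1n)$, while lower semicontinuity gives $\liminf_n S_n(a)\ge S(a)$; hence $S_n(a)\to S(a)$ for all $a>b$. A diagonal choice $\delta_n\downarrow0$, combined with the right-continuity of $S$ and lower semicontinuity, yields convex functions $S_{n,\delta_n}$ on $[b+\delta_n,\infty)$ converging pointwise to $S$ on $(b,\infty)$; a pointwise limit of convex functions being convex, $S$ is convex on $(b,\infty)$, hence on $[b,\infty)$ by the first paragraph.

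The main obstacle I anticipate is twofold: verifying the domain condition $u'_s\Phi_{n,\delta}\in\D(P)$ needed to apply Theorem~\ref{antthm} to the regularized vectors — this is exactly why two regularizations appear, the ``$+\tfrac1n\omega$'' one to secure comparability and the translation-smoothing one to enter $\D(P)$, and making them compatible (comparability uniform in $\delta$, smoothing uniform in $n$) is the crux — and establishing the relative-entropy continuity lemma along uniformly comparable nets, the only step that goes beyond the lower semicontinuity valid for arbitrary sequences. It is precisely the hypothesis $\Phi=m'\Omega$ with $m'\in\M(b)'$ that makes all the approximating states uniformly dominated by $\omega$ and so keeps these two points under control.
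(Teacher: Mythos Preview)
Your overall architecture matches the paper's: a two-layer regularization (one to force $c$-comparability, one to enter the domain of $P$), the ant formula to get convexity of the regularized entropy, then a pointwise limit of convex functions. The specific regularizations differ. The paper first passes from $\varphi$ to $\varphi_\epsilon=(1-\epsilon)\varphi+\epsilon\omega$ (your ``$+\tfrac1n$'' step, essentially) and then, crucially, smooths the \emph{state} by composing with the completely positive translation-averaging map $T_\lambda(m)=\lambda\int_0^\infty e^{-\lambda a}U(a)mU(a)^*\,da$, working with the natural-cone representative $\Phi_\lambda(a)\in\P^\sharp_\Omega[\M(a)]$. This buys two things you are missing. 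First, $S_\lambda(a)=S(\varphi\circ T_\lambda\,|\!|\,\omega\circ T_\lambda)_{\M(a)}\le S(a)$ by the data-processing inequality, so $S_\lambda\to S$ follows immediately from lower semicontinuity; your ``continuity along uniformly comparable nets'' lemma is thereby avoided. Second, the verification that $\Phi_\lambda$ and $u'_{\lambda,s}\Phi_\lambda$ lie in $\D(P)$ is done in the paper's appendix via the expansional representation $\Phi_\lambda=\Omega^{h_\lambda}$ and the nontrivial fact that $[P,h_\lambda]\in\M$ is \emph{bounded}, where $h_\lambda=\log\Delta_{\Phi_\lambda,\Omega}-\log\Delta_\Omega$; this is exactly what propagates the $\D(P)$-condition through the cocycle.

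Your sketch for the cocycle domain condition has a genuine gap. Factoring $\Delta^{-is}_{\Phi_{n,\delta},\Omega;\M(a)}=u_{s}\,\Delta^{-is}_{\Omega;\M(a)}$ with $u_s\in\M(a)$ and using that $\Delta^{it}_{\Omega;\M(a)}$ preserves $\D(P)$ is fine, but multiplication by the bounded operator $u_s$ does \emph{not} preserve $\D(P)$ without further input---what is needed is precisely a bound on $[P,h_{n,\delta;a}]$ (equivalently, smoothness of $s\mapsto u_s$ for the $P$-dynamics), and ``analyticity from $c_n$-comparability'' gives smoothness only in the $\Delta_\Omega$-direction, not the $P$-direction. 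Your translation smoothing does make $[P,m'_{n,\delta}]$ bounded (integrate by parts in $t$), but transferring this to a bound on $[P,h_{n,\delta;a}]$ with $h_{n,\delta;a}\in\M(a)$ is not automatic and is essentially the content of the paper's appendix argument. So your plan is on the right track, but the acknowledged obstacle is real and requires work comparable to what the paper invests; the paper's channel-based regularization is the cleaner route precisely because it packages both the convergence and the domain issues.
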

\begin{proof}
The proof is based on the ant-formula, theorem \ref{antthm}. We will first construct a 
class of regular states to which the theorem  \ref{antthm} can be applied, and then we  remove the regulators one by one. 
Throughout the proof, we let $\varphi = (\Phi, \cdot \Phi), \omega = (\Omega, \cdot \Omega)$ be the states on $\M(b)$ associated with 
the vectors $\Phi, \Omega$. 

We first assume that $\varphi$ is $c$-comparable to $\omega$ on $\M(b)$, see \eqref{ccomp}. 
Then we regulate $\varphi$ using the unital, normal, completely positive linear map $T_\lambda: \M(b) \to \M(b)$ 
defined by
\ben
\label{Tladef}
T_\lambda(m) := \lambda \int_0^\infty e^{-\lambda a} U(a)mU(a)^* \, da,
\een
where $\lambda >0$. The properties of this regulator are analyzed in the appendix. We find, in particular, that $\lim_{\lambda \to \infty} T_\lambda(m)=m$
in the strong topology for any $m \in \M$. A corresponding regulated state on $\M(b)$ is then defined as 
\ben
\varphi_\lambda(m) = \varphi(T_\lambda(m)), 
\een
so that $\lim_{\lambda \to \infty} \varphi_\lambda = \varphi$ in the weak topology on $\M^+_*$. Since by construction we have that $\omega(T_\lambda(m)) = \omega(m)$, 
it follows that $\varphi_\lambda$ is $c$-comparable to $\omega$ on $\M(b)$. 

By construction, $T_\lambda$ restricts to a unital, normal, completely positive linear map $T_\lambda: \M(a) \to \M(a)$
for any $a \ge b$, and, considered as a state on $\M(a)$, $\varphi_\lambda$ is still $c$-comparable to $\omega$. 
Let $\Phi_{\lambda}(a)$ be the vector  representative of $\varphi_\lambda$ in the natural cone $\P^\sharp_\Omega[\M(a)]$, i.e., 
when considered as a state on $\M(a)$ for any $a \ge b$. By applying theorem \ref{thmA1} to $\M(a)$ instead of $\M$, we have 
that $\Phi_{\lambda}(a) \in \D(P) \cap \D(\log \Delta'_a)$, 
and as a consequence of lemma \ref{lem3.1}, it follows that 
\ben
\label{regS}
S_\lambda(a) := S(\varphi_\lambda |\!| \omega)_{\M(a)}
\een
is continuously differentiable on $[b,\infty)$.
Furthermore, by theorem \ref{thmA1},
 if we let $u^{\prime}_{\lambda,a,s}$ be the Connes-cocycle \eqref{uDD} associated with $\M(a)'$, $\Phi_{\lambda}(a), \Omega$, 
then $u^{\prime}_{\lambda, a, s} \Phi_\lambda(a) \in \D(P)$, for any $a \ge b$. 
By theorem \ref{antthm} and remark \ref{remant}, we therefore have the ant-formula 
\ben
\label{antleft1}
-\partial S_\lambda(a) = 2\pi \inf_{u' \in \M(a)'} (u'\Phi_\lambda, P u'\Phi_\lambda),
\een
for {\it any} vector representative $\Phi_\lambda$ of $\varphi_\lambda$ on $\M(b)$ and any $a \ge b$. 

Since $\M(a_1)' \subset \M(a_2)'$ for $a_1 \le a_2$, \eqref{antleft1} implies that $\partial S_\lambda(a_1) \le \partial S_\lambda(a_2)$, 
so $S_\lambda$ is convex on $[b,\infty)$. Now, in view of the monotonicity of the relative entropy, one has 
\ben
S_\lambda(a) = S(\varphi \circ T_\lambda |\!| \omega)_{\M(a)} = S(\varphi \circ T_\lambda |\!| \omega \circ T_\lambda )_{\M(a)} \le S(\varphi  |\!| \omega)_{\M(a)} = S(a). 
\een
Then, by the lower semi-continuity of the relative entropy, one has
\ben
S(a) \le \liminf_{\lambda \to \infty} S_\lambda(a) \le \limsup_{\lambda \to \infty} S_\lambda(a) \le S(a), 
\een
so $\lim_{\lambda \to \infty} S_\lambda(a) = S(a)$ for any $a \ge b$. Since the  limit of a pointwise convergent sequence of convex function is convex, it follows that $S: [b,\infty) \to 
[0,\infty]$ is convex for any $\varphi$ that is $c$-comparable to $\omega$ for some $c>0$. 

Now consider vectors of the form $\Phi = m'\Omega$ where $m' \in \M(b)'$. To reduce this situation to the previous case, we employ the well-known trick to set $\varphi_{\epsilon} = \epsilon \omega + (1-\epsilon) \varphi$ for a small $\epsilon>0$, which is $c$-comparable to $\omega$ on $\M(b)$ and hence on any $\M(a)$ for $a \ge b$ for a suitable $c$. We define $S_{\epsilon}$ by \eqref{regS} for the state $\varphi_{\epsilon}$, so by the previous argument, $S_\epsilon$ is convex on $[b,\infty)$.  By combining the lower semi-continuity and convexity of the relative entropy functional $\M(a)_* \owns \varphi \mapsto S(\varphi |\!| \omega)_{\M(a)} \in [0,\infty]$, we find $\lim_{\epsilon \to 0+} S_{\epsilon}(a) = S(a)$ \cite{Ar77}. Since the limit of a pointwise convergent sequence of convex functions is convex, $S_{}: [b,\infty) \to [0,\infty]$ is convex. 
\end{proof}
\subsection{Weakening the regularity condition in Thm. \ref{thm:qnec}}

In this subsection, we show that theorem \ref{thm:qnec} remains true for states $\Phi$ in a larger domain. Our domain is mathematically natural and related to the 
domain of the square root of the relative modular operator, but we note that our restrictions are still stronger than those 
required by \cite{CF18} in their proof of the QNEC. We first give a characterization of our domain. 

Let $\M$ be a von Neumann algebra on the Hilbert space $\H$ and $\Om,\Phi\in\H$ cyclic and separating vectors for $\M$. Recall the anti-linear operators
$S_{\Om,\Phi}  : m\Phi \mapsto m^*\Om\, , \quad m\in \M\, ,
S'_{\Om,\Phi} : m'\Phi \mapsto m'^*\Om\, , \quad m'\in \M'\,$
and use the same symbol for their closures, see section \ref{modthbas}. Then we have polar decompositions
as in  \eqref{poldec}, and we recall that
\ben\label{JJ3}
S^*_{\Om,\Phi} = S'_{\Om,\Phi}\, ,\qquad {\Delta}'_{\Om,\Phi} = {\Delta}^{- 1}_{\Phi,\Om}\, ;
\een
so
\ben\label{pdS}
\D(S_{\Om,\Phi}) = \D(\Delta^{1/2}_{\Om,\Phi}) = \D({\Delta}^{-1/2}_{\Phi,\Om})\, .
\een
We write $T\, \hat\in \, \M$ if  $T$ is an operator affiliated with $\M$, that is, $T$ is a closed, densely defined operator such that
 $u'Tu'^* = T$ for all unitaries $u'\in \M'$;
equivalently, $Tm' \supset m'T$ for all $m'\in\M'$. 

We define 
\[
\C(\M;\Phi,\Om) := \big\{T\,\hat\in\, \M: \Om\in D(T), \Phi\in D(T^*)\big\}\,;
\]
clearly $\C(\M;\Phi,\Om) \supset \M$. 

We now show that every $\Psi\in \D(\Delta^{1/2}_{\Phi,\Om})$ has the form $\Psi = T\Om$ for some $T\,\hat\in\, \M$. 
\begin{proposition}
\label{propdom}
$\D(\Delta^{1/2}_{\Phi,\Om}) = \D(\Delta'^{-1/2}_{\Om,\Phi}) = \C(\M;\Phi,\Om)\Om$, and we have $S_{\Phi,\Om}T\Om = T^*\Phi$ for every $T\in \C(\M;\Phi,\Om)$. 
\end{proposition}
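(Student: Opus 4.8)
The plan is to establish the set equality $\D(\Delta^{1/2}_{\Phi,\Om}) = \C(\M;\Phi,\Om)\Om$ by a double inclusion, after two preliminary reductions. First, $\D(\Delta^{1/2}_{\Phi,\Om}) = \D(\Delta'^{-1/2}_{\Om,\Phi})$ is immediate from $\Delta'_{\Om,\Phi} = \Delta^{-1}_{\Phi,\Om}$ (eq. \eqref{JJ3}) together with the Borel functional calculus. Second, by the polar decomposition $S_{\Phi,\Om} = J_{\Phi,\Om}\Delta^{1/2}_{\Phi,\Om}$ in \eqref{poldec} one has $\D(\Delta^{1/2}_{\Phi,\Om}) = \D(S_{\Phi,\Om})$ (this needs no injectivity of $J_{\Phi,\Om}$). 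So it suffices to prove $\D(S_{\Phi,\Om}) = \C(\M;\Phi,\Om)\Om$ and the identity $S_{\Phi,\Om}T\Om = T^*\Phi$ for $T \in \C(\M;\Phi,\Om)$.

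For the inclusion $\C(\M;\Phi,\Om)\Om \subseteq \D(S_{\Phi,\Om})$ I would argue by truncation. Let $T\,\hat\in\,\M$ with $\Om\in\D(T)$, $\Phi\in\D(T^*)$, and take the polar decomposition $T = V|T|$ with $V\in\M$ a partial isometry and $|T|$ positive self-adjoint affiliated with $\M$, with spectral resolution $|T| = \int_0^\infty t\,dE(t)$, $E(t)\in\M$. The bounded operators $T_n := V\int_0^n t\,dE(t)$ then lie in $\M$, so $S_{\Phi,\Om}(T_n\Om) = T_n^*\Phi$. Using $\D(T) = \D(|T|)$ and $\D(T^*) = \D(|T|V^*)$ one checks $T_n\Om \to T\Om$ and $T_n^*\Phi = \left(\int_0^n t\,dE(t)\right)V^*\Phi \to |T|V^*\Phi = T^*\Phi$ as $n\to\infty$. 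Since $S_{\Phi,\Om}$ is closed, $T\Om\in\D(S_{\Phi,\Om})$ and $S_{\Phi,\Om}(T\Om) = T^*\Phi$, which gives both the inclusion and the formula.

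For the reverse inclusion, let $\Psi\in\D(S_{\Phi,\Om})$ and set $\Psi^\sharp := S_{\Phi,\Om}\Psi$. Since $\Om$ separates $\M'$, the map $T_0: m'\Om\mapsto m'\Psi$ is well defined on the dense domain $\M'\Om$ and commutes with $\M'$ by construction. To close it I would exhibit a densely defined operator contained in $T_0^*$: on the dense domain $\M'\Phi$ (dense since $\Phi$ is cyclic for $\M'$) define $\tilde T_0 : m'\Phi \mapsto m'\Psi^\sharp$. Checking $\tilde T_0 \subseteq T_0^*$ unwinds to the single identity $(k'\Psi, \Phi) = (k'\Om, \Psi^\sharp)$ for all $k'\in\M'$ (up to a complex conjugation, according to the inner-product convention), and this is exactly what one obtains by pairing $\Psi^\sharp = S_{\Phi,\Om}\Psi$ against $k'\Om\in\D(S^*_{\Phi,\Om})$ and using $S^*_{\Phi,\Om} = S'_{\Phi,\Om}$ from \eqref{JJ3}, so that $S^*_{\Phi,\Om}(k'\Om) = k'^*\Phi$. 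Hence $T_0$ is closable; set $T := \overline{T_0}$. Then $T$ is closed and affiliated with $\M$, $\Om\in\D(T)$ with $T\Om = \Psi$, and since $T^* = T_0^* \supseteq \tilde T_0$ we get $\Phi\in\D(T^*)$ with $T^*\Phi = \Psi^\sharp$. Therefore $\Psi = T\Om$ with $T\in\C(\M;\Phi,\Om)$, and the asserted formula holds.

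I expect the closability step in the last paragraph to be the only genuine obstacle: it is the one place the hypothesis $\Psi\in\D(\Delta^{1/2}_{\Phi,\Om}) = \D(S_{\Phi,\Om})$ is actually used, and it is a small-scale variant of Tomita's lemma. Some care is needed with the antilinearity conventions in the duality $(S_{\Phi,\Om}\xi,\eta) = (S^*_{\Phi,\Om}\eta,\xi)$, and with the routine but slightly delicate point that an operator commuting with all unitaries of $\M'$ on a core remains affiliated with $\M$ after taking the closure. The first inclusion is a standard truncate-and-use-closedness computation with no surprises.
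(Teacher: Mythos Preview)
Your proof is correct and, for the reverse inclusion $\D(S_{\Phi,\Om}) \subset \C(\M;\Phi,\Om)\Om$, essentially identical to the paper's: both define $T_0 : m'\Om \mapsto m'\Psi$ on $\M'\Om$ and $\tilde T_0 : m'\Phi \mapsto m'\,S_{\Phi,\Om}\Psi$ on $\M'\Phi$, verify $\tilde T_0 \subset T_0^*$ via the adjoint relation $S^*_{\Phi,\Om} = S'_{\Phi,\Om}$, and take the closure.

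The forward inclusion $\C(\M;\Phi,\Om)\Om \subset \D(S_{\Phi,\Om})$ is where you diverge. The paper does not truncate: it argues directly that $T\Om \in \D(S'^{*}_{\Phi,\Om}) = \D(S_{\Phi,\Om})$ by computing, for all $m'\in\M'$,
\[
(T\Om,\, S'_{\Phi,\Om} m'\Om) = (T\Om,\, m'^*\Phi) = (m'T\Om,\,\Phi) = (Tm'\Om,\,\Phi) = (m'\Om,\, T^*\Phi),
\]
using only $Tm' \supset m'T$ and $\Phi\in\D(T^*)$, together with the fact that $\M'\Om$ is a core for $S'_{\Phi,\Om}$. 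This is a two-line argument that avoids the polar decomposition and spectral cutoff entirely. Your truncation argument is also valid and is the more ``constructive'' route (it actually exhibits a sequence in $\M$ witnessing the closure), but it costs you the auxiliary verifications $T_n\Om\to T\Om$, $T_n^*\Phi\to T^*\Phi$ and the identification $\D(T^*)=\D(|T|V^*)$. Either way the formula $S_{\Phi,\Om}T\Om = T^*\Phi$ falls out.
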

\proof
We show that $\C(\M;\Phi,\Om)\Om = \D(S_{\Phi,\Om})$, the proposition then follows by \eqref{pdS}. 

$\bullet$ $\C(\M;\Phi,\Om)\Om \subset \D(S_{\Phi,\Om})$:  With $T\in \C(\M;\Phi,\Om)$, by the first identity in \eqref{JJ3} it suffices to show that $T\Om\in D(S'^*_{\Phi,\Om})$.
This follows because for all $m'\in\M'$ we have the identities
\[
(T\Om, S'_{\Phi,\Om} m'\Om) = (T \Om, m'^* \Phi) = ( m' T \Om, \Phi) = (T m' \Om, \Phi)  = (m '\Om, T^*\Phi) 
= (m '\Om, S_{\Phi,\Om} T\Om)
\]
and the fact  that $\M'\Om$ is a core for $S'_{\Phi,\Om}$. This also shows that $S_{\Phi,\Om}T\Om = T^*\Phi$. 

$\bullet$ $\D(S_{\Phi,\Om}) \subset \C(\M;\Phi,\Om)\Om$:  Let $\Psi\in \D(S_{\Phi,\Om})$ and define the linear maps $T_1$, $T_2$
\ben\label{T1T2}
T_1: m'\Om \mapsto m'\Psi\, ,\qquad T_2 : m'\Phi \mapsto m' S_{\Phi,\Om}\Psi\, ,\qquad m' \in \M\, ,
\een
with domains $\M'\Om$ and $\M'\Psi$,
which are well and densely defined since both $\Om$ and $\Phi$ are cyclic and separating for $\M'$. We have
\begin{multline}
(m'_2\Phi, T_1 m'_1\Om) = (m'_2\Phi, m'_1\Psi) =  (m'^*_1 m'_2\Phi, \Psi)   =  (S'_{\Phi,\Om}m'^*_2 m_1\Om, \Psi) \\
=  (S^*_{\Phi,\Om}m'^*_2 m_1\Om, \Psi) = (S_{\Phi,\Om}\Psi, m'^*_2 m'_1\Om)  = (m'_2 S_{\Phi,\Om}\Psi,  m'_1\Om) 
 = (T_2 m'_2 \Phi,  m'_1\Om) 
\end{multline}
with $m_1 , m_2\in\M$, thus $T_1 \subset T^*_2$, $T_2 \subset T^*_1$, and $T_1$, $T_2$ are closable. Let $T$ be the closure of $T_1$; since
\[
u' T_1 u'^* m' \Om = u' u'^* m' \Psi =  m' \Psi  = T_1 m'\Om\, , \quad m'\in \M' \,  ,
\]
for all unitaries $u'\in\M'$, we have $u' T_1 u'^* = T_1$, thus $u' T u'^* = T$, that is $T\, \hat\in\, \M$. 

From \eqref{T1T2} we see that $T\Om = \Psi$ and $S_{\Phi,\Om}\Psi = T_2\Phi = T^*\Phi$, thus $T\in  \C(\M;\Phi,\Om)$. 
\eproof

\begin{theorem}
Theorem \ref{thm:qnec} remains true for states of the form 
$\Phi = T'\Omega$ when $T' \hat\in \M(b)'$ and $\Omega \in \D(T')$. 
The vectors $\Phi$ such that $\Phi \in \D(\Delta^{\prime 1/2}_{b})$, 
where $\Delta^{\prime}_{b} = \Delta_{\Phi,\Omega;\M(b)'}$, all have this form.
\end{theorem}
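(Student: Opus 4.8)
The plan is to reduce the assertion to Theorem~\ref{thm:qnec} itself, which already gives convexity of $a\mapsto S(a)$ for vectors of the form $\Phi=m'\Omega$ with $m'\in\M(b)'$ \emph{bounded}, by a spectral truncation of the unbounded affiliated operator $T'$. First, the last sentence of the theorem is just Proposition~\ref{propdom} applied with $\M$ replaced by $\M(b)'$ (so that the commutant of the relevant algebra is $\M(b)$) and the roles of $\Phi,\Omega$ unchanged: this gives $\D(\Delta^{\prime 1/2}_{b})=\D(\Delta^{1/2}_{\Phi,\Omega;\M(b)'})=\C(\M(b)';\Phi,\Omega)\,\Omega$, and by definition every element of $\C(\M(b)';\Phi,\Omega)$ is an operator $T'\,\hat\in\,\M(b)'$ with $\Omega\in\D(T')$ (and $\Phi\in\D((T')^{*})$). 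So it remains to prove convexity of $a\mapsto S(a)=S(\Phi|\!|\Omega)_{\M(a)}$ on $[b,\infty)$ when $\Phi=T'\Omega$ for some $T'\,\hat\in\,\M(b)'$ with $\Omega\in\D(T')$.

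To this end I would take the polar decomposition $T'=V'|T'|$ (with $V'\in\M(b)'$ a partial isometry and $|T'|\,\hat\in\,\M(b)'$), let $e'_n=\chi_{[0,n]}(|T'|)\in\M(b)'$ be the spectral projections, and set $m'_n:=V'|T'|e'_n\in\M(b)'$ — a bounded operator with $\|m'_n\|\le n$ — and $\Phi_n:=m'_n\Omega$. Since $\Omega\in\D(T')=\D(|T'|)$, one has $\Phi_n=V'e'_n|T'|\Omega\to V'|T'|\Omega=\Phi$ in norm. For $a\ge b$ we have $m'_n\in\M(b)'\subseteq\M(a)'$, and a short computation (using that $|T'|$ and $V'$ commute with $\M(a)$) shows that for $z\in\M(a)$ the functional $\varphi_n:=(\Phi_n,\,\cdot\,\Phi_n)$ on $\M(a)$ satisfies $\varphi_n(z^{*}z)=\||T'|e'_n z\Omega\|^{2}$, which by the spectral theorem increases in $n$ to $\||T'|z\Omega\|^{2}=(\Phi,z^{*}z\,\Phi)=:\varphi(z^{*}z)$; moreover $\varphi_n\le n^{2}\omega$, so $S(\varphi_n|\!|\omega)_{\M(a)}<\infty$. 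Hence, on each $\M(a)$ with $a\ge b$, the normal positive functionals $\varphi_n$ increase to $\varphi$. Since $\Phi_n=m'_n\Omega$ with $m'_n\in\M(b)'$, Theorem~\ref{thm:qnec} applies to $\Phi_n$ and tells us that $S_n(a):=S(\varphi_n|\!|\omega)_{\M(a)}$ is a convex function of $a\in[b,\infty)$.

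Finally I would pass to the limit. By the continuity of Araki's relative entropy $S(\,\cdot\,|\!|\omega)$ along increasing nets of normal positive functionals \cite{Ar76,Ar77}, $\varphi_n\nearrow\varphi$ forces $S_n(a)\to S(a)=S(\varphi|\!|\omega)_{\M(a)}$ for every $a\ge b$; and a pointwise limit of convex functions is convex, so $a\mapsto S(a)$ is convex on $[b,\infty)$, which is the claim. The only step that is more than bookkeeping is this last one: it is crucial that the truncations produce a genuinely \emph{increasing} net $\varphi_n$, since mere norm convergence $\Phi_n\to\Phi$ would only give the lower semicontinuity bound $S(a)\le\liminf_n S_n(a)$, which is too weak to carry convexity to the limit — the monotonicity is precisely what, through the quoted continuity property, upgrades this to $S_n(a)\to S(a)$; that continuity property I take as known rather than reprove. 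Everything else — Proposition~\ref{propdom}, the norm convergence and boundedness of the $\Phi_n$, the applicability of Theorem~\ref{thm:qnec}, and stability of convexity under pointwise limits — is routine, and if $\Phi$ fails to be cyclic or separating for $\M(b)'$ the usual support-projection modifications apply without affecting the argument.
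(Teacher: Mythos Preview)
Your reduction to Theorem~\ref{thm:qnec} via spectral truncation is sound, and your handling of Proposition~\ref{propdom} and of the monotonicity $\varphi_n\nearrow\varphi$ on each $\M(a)$ is correct. The approach, however, differs from the paper's in one essential respect: you skip the $\epsilon$-regularization. The paper truncates via $\Phi_{n,\epsilon}=(\epsilon 1+|T'|^{2}e'_n)^{1/2}\Omega$, so that the induced functionals are $\varphi_{n,\epsilon}=\epsilon\omega+\varphi_n$ (your $\varphi_n$) and increase to $\varphi_\epsilon=\varphi+\epsilon\omega$. The purpose of the extra $\epsilon$ is that $\varphi_\epsilon\ge\epsilon\omega$ is comparable to $\omega$, whence $h_\epsilon=\log\Delta_{\Phi_\epsilon,\Omega;\M(b)}-\log\Delta_{\Omega;\M(b)}$ is a lower-bounded self-adjoint operator affiliated with $\M(b)$; the chain rule $S(\varphi_{n,\epsilon}|\!|\omega)=S(\varphi_{n,\epsilon}|\!|\varphi_\epsilon)+\varphi_{n,\epsilon}(h_\epsilon)$ from \cite{OP} then reduces the $n\to\infty$ convergence to two pieces that are handled, respectively, by lower semicontinuity combined with $\varphi_{n,\epsilon}\le\varphi_\epsilon$, and by monotone convergence of the linear term against the lower-bounded $h_\epsilon$. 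The limit $\epsilon\to 0$ is then the same convexity/lower-semicontinuity argument already used inside Theorem~\ref{thm:qnec}.

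Your shortcut replaces this two-parameter procedure by a single appeal to ``continuity of $S(\,\cdot\,|\!|\omega)$ along increasing nets of normal positive functionals''. That is the right statement to aim for, but the citation \cite{Ar76,Ar77} does not seem to contain it: Araki proves joint lower semicontinuity and a monotone convergence in the \emph{second} argument, not in the first. In the commutative case your claim reduces to $(p_n\log(p_n/q))_+\nearrow(p\log(p/q))_+$ together with a uniform bound on the negative parts, which is elementary; but in the general von Neumann setting it is not an immediate consequence of the standard toolkit, and the paper's $\epsilon$-detour (attributed to a private communication) is precisely the device that yields a self-contained proof of the needed convergence. If you can supply a correct reference or a short direct argument for monotone convergence of the relative entropy in its first variable, your route is the more economical one; otherwise the $\epsilon$-regularization is the missing ingredient.
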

\begin{remark}
Note that $\Phi \in \D(\Delta^{\prime 1/2}_{b})$ implies that $S(b)<\infty$, hence $S(a)<\infty$ for all $a \ge b$ by monotonicity of 
the relative entropy.
\end{remark}

\proof The second statement follows immediately from proposition \ref{propdom}. 

For the first statement, we construct a sequence of regulated states which enables us to then make an approximation argument suggested to us by \cite{Wirth2}. 
We first make a polar decomposition $T' = v'|T'|$ and let $e_n'$ be the spectral projection of $|T'|$ associated with $[0,n]$, 
where $n \in \mathbb{N}$. Then $e_n', |T'|e_n'$ and $v'$ are in $\M(b)'$ by well-known characterizations of operators affiliated with a von Neumann algebra.
For $\epsilon>0$, set $\Phi_{n,\epsilon} = (\epsilon 1+|T'|^2e_n')^{1/2}\Omega$. The corresponding functionals $\varphi_{n,\epsilon}$ on 
$\M(b)$ satisfy (i) the QNEC by the proof of theorem \ref{thm:qnec},  (ii) $\varphi_{n,\epsilon}-\varphi_{k,\epsilon} \ge 0$ for $n \ge k$, (iii) $\lim_{n\to \infty} \varphi_{n,\epsilon} =\varphi_\epsilon$ in the weak sense, where $\varphi_\epsilon=\varphi + \epsilon \omega$, (iv) $\varphi_{n,\epsilon} \le \varphi_\epsilon$.
By \cite[Lem. 12.2 and Prop. 12.9]{OP} 
\ben
\label{weq}
S(\Phi_{n,\epsilon} |\! | \Omega)_{\M(b)} = S(\Phi_{n,\epsilon} |\! | \Phi_\epsilon)_{\M(b)} + \varphi_{n,\epsilon}(h_{\epsilon}).
\een
Here $\Phi_\epsilon$ is a vector representing $\varphi_\epsilon$, and $h_{\epsilon} = \log \Delta_{\Phi_\epsilon,\Omega;\M(b)}- \log \Delta_{\Omega;\M(b)}$, which is an extended-valued lower bounded self-adjoint operator affiliated with $\M(b)$, see, e.g., \cite[Ch. 12]{OP} (lower bound $(\log \epsilon)1$). By (iii),(iv), and lower semi-continuity of the relative entropy, 
\ben
0 = S(\Phi_{\epsilon} |\! | \Phi_\epsilon)_{\M(b)} \le \liminf_{n \to \infty} S(\Phi_{n,\epsilon} |\! | \Phi_\epsilon)_{\M(b)} \le \limsup_{n \to \infty} S(\Phi_{n,\epsilon} |\! | \Phi_\epsilon)_{\M(b)} \le 0, 
\een
so $\lim_{n \to \infty} S(\Phi_{n,\epsilon} |\! | \Phi_\epsilon)_{\M(b)}=0$.
By (ii),(iii), and $h_\epsilon \ge (\log \epsilon)1$, we have
$\lim_{n \to \infty} \varphi_{n,\epsilon}(h_{\epsilon}) = \varphi_\epsilon(h_\epsilon)$. Thus, $\lim_{n \to \infty} S(\Phi_{n,\epsilon} |\! | \Omega)_{\M(b)} = S(\Phi_{\epsilon} |\! | \Omega)_{\M(b)}$, and then by the same argument as in the proof of theorem \ref{thm:qnec}, we get that
$\lim_{\epsilon \to 0}\lim_{n \to \infty} S(\Phi_{n,\epsilon} |\! | \Omega)_{\M(b)} = S(\Phi|\! | \Omega)_{\M(b)}$. In these arguments we may replace $b$ by any $a \ge b$ because (i)--(iv) still hold. Thus, by (i), $[b, \infty) \owns a \mapsto S(\Phi |\!| \Omega)_{\M(a)}$ is convex. 
\eproof

\section{Outlook}
\label{outlook}
Our proof of the QNEC was mainly based on the expressions for $\partial S(a)$ and $\partial \bar S(a)$, the relative-entropy-currents between $\Phi$
and $\Omega$ with respect to $\M(a)$ respectively $\M(a)'$, given in proposition \ref{prop2}. These expressions have a close similarity to 
Spohn's classical formula \cite{S77} for the entropy production under a Markov semi-group when applied to the setting of half-sided modular inclusions. We note that 
Spohn's formula has recently been analyzed for continuous Markov semi-groups on general sigma-finite von Neumann algebras by \cite{W25} in the context of 
logarithmic Sobolev inequalities. It would be interesting to consider possible connections of logarithmic Sobolev inequalities to the QNEC. 

We may write the expressions in proposition \ref{prop2} as expectation values, e.g., 
$\partial S(a) = (\Phi, \Sigma_\Phi(a) \Phi)$, where
\ben
\label{prop2s11}
\Sigma_{\Phi}(a) = i[P, \log \Delta_{a}' - \log \Delta_{\Phi,a}'] .
\een
This operator is defined as a quadratic form on the domain $\D$ in proposition \ref{prop2}. As such, it has a vanishing commutator with any sufficiently smooth element $m' \in \M(a)'$. So in this sense, $\Sigma_\Phi(a)$ is affiliated with $\M(a)$, although due to its unbounded nature, it is not an element of $\M(a)$ in general. 

In view of $\partial S(a) = (\Phi, \Sigma_\Phi(a) \Phi)$, we may perhaps think of $\Sigma_\Phi$ as a relative-entropy-current operator. Adopting this interpretation, one may think  of 
\ben
{\rm Var}[\partial S] = (\Phi, \Sigma_\Phi^2 \Phi) - (\Phi, \Sigma_\Phi \Phi)^2
\een
as the variance of the relative-entropy-current, or simply the variance of the QNEC. Of course, since $\Sigma_\Phi(a)$ is only a quadratic form, this quantity may be infinite even if $\Phi \in \D$, although it is probably finite for sufficiently smooth vectors. 

Leaving this technical issue aside, our interpretation is supported by the fact that, for sufficiently smooth isometries $u' \in \M(a)'$, one has $\Sigma_{u'\Phi}(a) = u'\Sigma_\Phi(a)u'^*$, so this variance only depends on $\varphi = (\Phi, \ . \ \Phi)$ viewed as a state on $\M(a)$, 
and not on the particular vector representative.
We think that it would be worth further investigating this variance in the context of quantum gravity.

In view of frequent applications in the context of the replica method, it would also be worth investigating the validity of the QNEC for the 
R\'enyi entropies of index $n \in \mathbb{N}$, see \cite{R} for a heuristic proof in the case of free QFTs. 
\medskip

\noindent
{\bf Acknowledgements.} 
R.L. acknowledges the MIUR Excellence Department Project awarded to the Department of Mathematics, University of Rome Tor Vergata, CUP E83C23000330006.
S.H. warmly thanks Edward Witten for exchanges regarding the QNEC.

\noindent
{\bf Data availability.} Data sharing not applicable to this article as no data sets were generated or analysed during the current study.

\medskip

\noindent
{\bf Declarations}

\medskip

\noindent
{\bf Conflict of interest.} The authors have no relevant financial or non-financial interests to disclose.

\appendix

\section{Properties of $T_\lambda$}
\label{app:A}

In this appendix, we assume that $(\N \subset \M, \Omega)$ is a half-sided modular inclusion. Then we get $U(a)=e^{iaP}$ and 
we can define $T_\lambda: \M \to \M$ by \eqref{Tladef} for any $\lambda > 0$. We already 
noted that $T_\lambda$ is a unital, normal, completely positive map for real 
$\lambda > 0$.  Following \cite{P85}, \cite{P88} one may define 
\ben
\label{Vdef}
V m\Omega := T(m)\Omega, \quad m \in \M, 
\een
for any unital, normal, completely positive map $T$ on $\M$, and 
the Schwarz inequality $T(m)T(m^*) \ge T(mm^*)$ can be used to show that 
$V$ extends to a bounded linear operator on $\H$ with norm 
$\| V \| \le 1$. It is well-known \cite{P85}, \cite{P88} that $V$ plays nicely with the relative modular operators
involving the state $\Omega$ via the theory of operator monotone functions, and we will use such 
results below.  

We apply \eqref{Vdef} to $T_\lambda$ and call the corresponding linear operator $V_\lambda$.
This operator also has a more explicit expression, which follows immediately by applying \eqref{Tladef}, \eqref{Vdef} to the dense subspace of vectors of the form 
$\xi = m\Omega, m \in \M$ and using $U(a)\Omega = \Omega$:
\ben
\label{Vform}
V_\lambda = \frac{\lambda}{\lambda -iP}.
\een
From $J_\Omega P J_\Omega = P$, we then also get 
\ben
J_\Omega V_\lambda J_\Omega = V_\lambda^* = \frac{\lambda}{\lambda +iP}.
\een
Given a state $\varphi$ on $\M$, we set $\varphi_\lambda(m):= \varphi(T_\lambda(m))$, and we let $\Phi_\lambda \in \P^\sharp_\Omega$ be 
its representer in the natural cone. It is obvious that $\omega_\lambda = \omega$, 
where $\omega = (\Omega, \cdot \Omega)$ is the state on $\M$ induced by $\Omega$.
We also set $\varphi'_\lambda(m') = (\Phi_\lambda, m' \Phi_\lambda) = \varphi_\lambda(J_\Omega m'J_\Omega), m' \in \M'$, 
which defines a state on $\M'$. The Connes-cocycles \eqref{uDD} associated with $\varphi'_\lambda$ and $\varphi_\lambda$ are 
denoted by $u'_{\lambda, s}$ and $u_{\lambda, s}$.

In the following we assume that $\varphi$ is $c$-comparable to $\omega$ (see \eqref{ccomp}) for some fixed $c>0$.
\begin{theorem}
\label{thmA1}
$T_\lambda$ has the following properties for $\lambda>0$:
\begin{enumerate}
\item $\lim_{\lambda \to \infty} \| (T_\lambda(m) - m) \xi \| = 0$ for all $\xi \in \H, m \in \M$. 
\item $\varphi_\lambda$ is $c$-comparable to $\omega$ (see \eqref{ccomp}). 
\item If $h_\lambda = \log \Delta_{\Phi_\lambda, \Omega} - \log \Delta_\Omega$, 
then both $h_\lambda$ and $[P,h_\lambda]$ are  in $\M$.
\item $\Phi_\lambda \in \D(P)$.
\item $u'_{\lambda,s} \Phi_\lambda, u_{\lambda,s} \Phi_\lambda \in \D(P)$.
\item $\Phi_\lambda \in \D(\log \Delta_{\Phi_\lambda, \Omega}')$.
\end{enumerate}
\end{theorem}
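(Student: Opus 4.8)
\noindent
The plan is to establish (1)--(6) essentially in the listed order; (1)--(2) are elementary and (3)--(6) all rest on a single smoothing identity for $T_\l$ together with Araki's perturbation theory for $c$-comparable states. For (1), substituting $a=t/\l$ gives $(T_\l(m)-m)\xi=\int_0^\infty e^{-t}\big(U(t/\l)mU(t/\l)^*-m\big)\xi\,dt$; the integrand is bounded in norm by $2\|m\|\,\|\xi\|$ and tends to $0$ pointwise as $\l\to\infty$ by strong continuity of $a\mapsto U(a)$, so dominated convergence applies. Item (2) is immediate: $T_\l$ is unital, normal and completely positive, hence positive, and $\om\circ T_\l=\om$ since $U(a)\Om=\Om$; feeding this into \eqref{ccomp} for $m\ge 0$ gives $c\,\om(m)=c\,\om(T_\l(m))\le\varphi(T_\l(m))=\varphi_\l(m)\le c^{-1}\om(m)$.

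\noindent
The structural input for the rest is the identity
\ben
\label{smooth}
i[P,T_\l(m)]=\l\big(T_\l(m)-m\big)\in\M,
\een
obtained by differentiating $U(b)T_\l(m)U(-b)=\l e^{\l b}\int_b^\infty e^{-\l a}U(a)mU(-a)\,da$ at $b=0$ (the derivative existing in operator norm, so $T_\l(m)\D(P)\subseteq\D(P)$). Equivalently, the factor $V_\l=\l(\l-iP)^{-1}$ of \eqref{Vform} carries the $P$-smoothing; using item 6 of Theorem \ref{thm1} one also has $J_\Om V_\l J_\Om=V_\l^*$ and $\Delta_\Om^{it}V_\l\Delta_\Om^{-it}=V_{e^{2\pi t}\l}$, i.e. the covariance $\sigma^\Om_t\circ T_\l=T_{e^{2\pi t}\l}\circ\sigma^\Om_t$ of the regulator under the modular flow of $\Om$.

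\noindent
For (3), $h_\l\in\M$ with $\|h_\l\|\le\log c^{-1}$ is Araki's bound \cite{Araki4} for $c$-comparable states, already recalled before Lemma \ref{lem3.1}. To obtain $[P,h_\l]\in\M$ I would express the Connes cocycle $(D\varphi_\l:D\om)_t=\Delta_{\Phi_\l,\Om}^{it}\Delta_\Om^{-it}$, whose generator is $h_\l$, through $V_\l$ and $\Delta_\Om$ using Petz's description of the contraction attached to a unital normal CP map, and then combine \eqref{smooth} with $P\Om=0$ and $\Delta_\Om^{-it}P\Delta_\Om^{it}=e^{2\pi t}P$ to bound $[P,h_\l]$ uniformly; equivalently, one may realize $U(b)h_\l U(-b)$ as the relative Hamiltonian of a $b$-dependent family of regulated states through the covariance above and differentiate at $b=0$ with norm-bounded derivative, in analogy with \eqref{smooth}. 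This step --- obtaining a sufficiently explicit hold on the relative Hamiltonian of the regulated state --- is the main obstacle.

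\noindent
Items (4)--(6) then follow once (3) is in hand. Since $\Phi_\l$ and $\Om$ lie in the natural cone $\P^\sharp_\Om$ and $\varphi_\l$ is $c$-comparable to $\om$, one has $\Phi_\l=\Delta_{\Phi_\l,\Om}^{1/2}\Om=e^{(\log\Delta_\Om+h_\l)/2}\Om$, and by \eqref{uDD} and \eqref{JJ3}, $u'_{\l,s}\Phi_\l=\Delta_{\Phi_\l,\Om}^{\frac12-is}\Om$ and $u_{\l,s}\Phi_\l=\Delta_\Om^{is}\Delta_{\Phi_\l,\Om}^{\frac12-is}\Om$, both built from $e^{(\frac12-is)(\log\Delta_\Om+h_\l)}\Om$. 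Expanding these in a Dyson series around $e^{(\frac12-is)\log\Delta_\Om}$ and using $\Delta_\Om^{w}\Om=\Om$ for all $w$, the boundedness of $h_\l$ and of $[P,h_\l]$ (item 3), $P\Om=0$, and the $\Delta_\Om$-covariance of $P$, one checks that each term lies in $\D(P)$ with $P$-graph-norms summable in the expansion order, so the series converges in the graph norm of $P$; since $\Delta_\Om^{is}$ preserves $\D(P)$, this yields (4) and (5). Finally, (6) already follows from (2): applying the $c$-comparability of $\varphi_\l$ to $J_\Om m'J_\Om\in\M_+$ shows $\varphi'_\l$ is $c$-comparable to $\om'$ on $\M'$, whence $\Phi_\l\in\D(\log\Delta'_{\Phi_\l,\Om})$ by the same argument recalled before Lemma \ref{lem3.1}.
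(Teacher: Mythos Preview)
Items (1), (2), (4)--(6) are essentially correct and parallel the paper. For (4)--(5) the paper also uses Araki's expansional (Dyson) series for $\Phi_\lambda=\Omega^{h_\lambda}$ and for the cocycles, controlling the $P$-action term by term via a Duhamel identity and the Araki--Masuda multilinear bounds, then analytically continuing from real $t$ to $-i/2$; your outline is the same idea. For (6) the paper argues slightly differently ($\Phi_\lambda\in\D(\Delta'^{\pm 1/2}_{\Phi_\lambda,\Omega})$ plus the spectral theorem), but your comparability-on-$\M'$ route works too. One minor slip: the $b$-derivative of $U(b)T_\lambda(m)U(-b)$ exists only in the strong topology (since $a\mapsto U(a)mU(-a)$ is in general not norm-continuous), though this still yields $i[P,T_\lambda(m)]=\lambda(T_\lambda(m)-m)$ as a bounded operator.

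The genuine gap is the bound on $[P,h_\lambda]$ in (3), which you yourself flag as ``the main obstacle'' without resolving it. Your smoothing identity concerns $T_\lambda(m)$, but $h_\lambda$ is \emph{not} of the form $T_\lambda(m)$: it depends on $\varphi_\lambda$ through the nonlinear map $\psi\mapsto\log\Delta_{\Psi,\Omega}-\log\Delta_\Omega$, so the identity does not transfer. Neither alternative you sketch closes this. Petz's construction only furnishes the quadratic-form \emph{inequality} $V_\lambda^*\Delta_{\Phi,\Omega}V_\lambda\le\Delta_{\Phi_\lambda,\Omega}$, not an algebraic expression for the cocycle in terms of $V_\lambda$; and $U(b)h_\lambda U(-b)$ is a relative Hamiltonian only on $\M(b)$, with no evident a priori control of its $b$-derivative in norm.

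The paper's argument for (3) is different and rather specific. After rescaling so that $h_\lambda\le 0$, it integrates the Petz inequality against the resolvent representation of $\log$ to obtain the one-sided operator inequality $V_\lambda^*(\log\Delta_{\Phi,\Omega})V_\lambda\le\log\Delta_{\Phi_\lambda,\Omega}$, and from the case $\Phi=\Omega$ together with $J_\Omega V_\lambda J_\Omega=V_\lambda^*$ also $-V_\lambda(\log\Delta_\Omega)V_\lambda^*\le -\log\Delta_\Omega$. Using $[\log\Delta_\Omega,iP]=-2\pi P$ and the explicit form $V_\lambda=\lambda(\lambda-iP)^{-1}$, one computes $V_\lambda(\log\Delta_\Omega)V_\lambda^*=V_\lambda^*(\log\Delta_\Omega)V_\lambda+4\pi\lambda^3P(\lambda^2+P^2)^{-2}\le V_\lambda^*(\log\Delta_\Omega)V_\lambda+4\pi V_\lambda^*V_\lambda$. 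Adding the two inequalities gives $V_\lambda^* h\,V_\lambda-4\pi V_\lambda^*V_\lambda\le h_\lambda\le 0$, where $h=\log\Delta_{\Phi,\Omega}-\log\Delta_\Omega$ is bounded. Since $V_\lambda(1\pm i\lambda^{-1}P)$ has norm $1$, this squeeze bounds the quadratic forms $((1\pm i\lambda^{-1}P)\xi,\,h_\lambda\,(1\pm i\lambda^{-1}P)\xi)$ uniformly, and $[P,h_\lambda]$ is then extracted as their difference, with the explicit bound $\|[P,h_\lambda]\|\le(4\pi+2\log c^{-1})\lambda$. This Petz-inequality-plus-sign-squeeze trick is the missing idea in your proposal; once it is in place, your treatment of (4)--(5) goes through exactly as you describe.
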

\begin{remark}
Using connections with ergodic theorems for von Neumann algebras, one can also 
obtain $\lim_{\lambda \to 0} T_\lambda(m) = \omega(m)1$ weakly,  provided $\M\cap\N' = \CC$, but we will not need such 
a relation in this work.
\end{remark}
\begin{proof}
1) follows from the strong continuity of $U(a)$ and the definition \eqref{Tladef}.

2) follows from the fact that $\omega_\lambda = \omega$.

3) As is well-known, the fact that $\varphi_\lambda$ is $c$-comparable to $\omega$ implies $h_\lambda \in \M$ and in fact $\| h_\lambda \| \le \log c^{-1}<\infty$ \cite{Araki4}, see e.g., \cite[Sec. 12]{OP} for a summary of related material.
To show that $[P,h_\lambda]$ is bounded, we proceed in several steps. First we pass from $\Phi$ to $c\Phi$, whereby  $h_\lambda$
changes to $h_\lambda + (\log c)1 \le 0$. It is obviously enough to show that $[P,h_\lambda]$ is bounded for this rescaled state. 

It is known by general constructions associated with operators $V$ of the form \eqref{Vdef} that 
\ben
(V_\lambda \xi, \Delta_{\Phi, \Omega} V_\lambda \xi) \le (\xi, \Delta_{\Phi_\lambda, \Omega}  \xi), \quad \xi \in \D(\Delta^{1/2}_{\Phi_\lambda, \Omega}), 
\een
and that this implies \cite[Proof of Lem. 1.2]{OP}
\ben
\label{80}
-( \xi, V_\lambda^*(\mu + \Delta_{\Phi, \Omega})^{-1} V_\lambda \xi) - \mu^{-1}
(\xi, (1-V_\lambda^* V_{\lambda}^{}) \xi)
\le -(\xi, (\mu+\Delta_{\Phi_\lambda, \Omega})^{-1}  \xi)
\een
for every $\mu>0$ and all $\xi \in \H$. Now we apply the representation \eqref{L1} of the logarithm, which gives
\ben
\label{81}
( \xi, V_\lambda^*(\log \Delta_{\Phi, \Omega}) V_\lambda \xi) 
\le (\xi, \log \Delta_{\Phi_\lambda, \Omega}  \xi)
\een
for all\footnote{\eqref{L1} requires ${\rm ker} A=\{0\}$ which holds in our case because 
$\varphi_\lambda$ is comparable to $\omega$.} $\xi$ such that the expectation values of the logarithms are defined. 
We may also apply this to the special case when $\Phi=\Omega$, which gives 
\ben
(\xi, V_\lambda^* (\log \Delta_{\Omega}) V_\lambda \xi) \le (\xi, \log \Delta_{\Omega}  \xi).
\een
Replacing $\xi$ by $J_\Omega \xi$ in this last inequality, using that $J_\Omega (\log \Delta_\Omega) J_\Omega = -\log \Delta_\Omega$ and that $J_\Omega V_\lambda J_\Omega = V_\lambda^*$ results in 
\ben
\label{83}
-( \xi, V_\lambda (\log \Delta_{\Omega}) V_\lambda^* \xi) \le -(\xi, \log \Delta_{\Omega}  \xi). 
\een
Next, the relations for half-sided modular inclusions give $[\log \Delta_\Omega, iP]= -2\pi P$, and thereby
\ben
\label{chain}
\begin{split}
&V_\lambda (\log \Delta_{\Omega}) V_\lambda^* \\
= &\, V_\lambda V_\lambda^* \log \Delta_{\Omega} + V_\lambda [\log \Delta_{\Omega}, V_\lambda^*]\\
= &\, V_\lambda^* V_\lambda \log \Delta_{\Omega} + V_\lambda [\log \Delta_{\Omega}, V_\lambda^*]\\
= & \,  V_\lambda^* (\log \Delta_{\Omega}) V_\lambda + V_\lambda [\log \Delta_{\Omega}, V_\lambda^*]-V_\lambda^* [\log \Delta_{\Omega}, V_\lambda]\\
= & \, V_\lambda^* (\log \Delta_{\Omega}) V_\lambda + 4\pi \lambda^3P (\lambda^2 + P^2)^{-2}\\
\le & \, V_\lambda^* (\log \Delta_{\Omega}) V_\lambda + 4\pi V_\lambda^* V_\lambda
\end{split}
\een
Using this identity when adding \eqref{83} and \eqref{81} gives 
\ben
\label{84}
-4\pi (V_\lambda \xi, V_\lambda \xi) + (V_\lambda \xi, (\log \Delta_{\Phi, \Omega} - \log \Delta_{\Omega}) V_\lambda \xi) \le (\xi, (\log \Delta_{\Phi_\lambda, \Omega} -\log \Delta_\Omega) \xi) = (\xi, h_\lambda \xi). 
\een
Using that $h_\lambda \le 0$ and that $\| V_\lambda (1\pm i\lambda^{-1}P)\| =1$ and $\| \log \Delta_{\Phi, \Omega} - \log \Delta_{\Omega}\| \le 2\log c^{-1}$,
 we see that $\| (1\pm i\lambda^{-1}P) h_\lambda (1\pm i\lambda^{-1}P)\| \le 4\pi + 2\log c^{-1}$, and therefore, that 
 \begin{multline}
 |( \xi, [P, h_\lambda]\xi)| = 
 \frac{\lambda}{2} \Big|
 \Big(V_\lambda (1+ i\lambda^{-1}P) \xi, h_\lambda V_\lambda (1+ i\lambda^{-1}P) \xi \Big)\\
 -
 \Big(V_\lambda (1- i\lambda^{-1}P) \xi, h_\lambda V_\lambda (1- i\lambda^{-1}P) \xi \Big)
 \Big|
 \le (4\pi + 2\log c^{-1})\lambda \| \xi \|^2.
 \end{multline}
 This shows $\| [P, h_\lambda] \| \le(4\pi + 2\log c^{-1})\lambda<\infty$. To show that $[P, h_\lambda]\in \M$ we use 
 $[iP, h_\lambda] =  \lim_{a \to 0} (U(a)h_\lambda U(a)^*-h_\lambda)/a$ in the strong sense 
 and the fact that $U(a)h_\lambda U(a)^* \in \M(a)$ from $h_\lambda \in \M$ and the properties of half-sided modular inclusions. 
 This concludes the proof of 3).
 
4) It is known \cite{Araki4, Araki5}, see also, e.g., \cite[Sec. 12]{OP}, that every state $\Psi \in \P^\sharp_\Omega$ in the natural cone such that $\psi = (\Psi, \cdot \Psi)$ is $c$-comparable to $\omega$ for some $c>0$
has the representation 
\ben \label{psih}
\Psi = \Omega^{h} := \sum_{n=0}^\infty \int_0^{1/2} ds_1 \int_0^{s_1} ds_2 \dots \int_0^{s_{n-1}} ds_n \, \Delta_\Omega^{s_n}h\Delta^{s_{n-1}-s_n}_\Omega h
\cdots \Delta^{s_{1}-s_2}_\Omega h \Omega,
\een
where $h = \Delta_{\Psi,\Omega}-\Delta_\Omega$ is a bounded operator (with norm at most $\log c^{-1}$). The series is in fact absolutely convergent in norm. 
This follows from the more general fact \cite[Lem. A]{AM82} that the function 
\ben
\label{Fdef}
F(z_1, \dots, z_n):= \left(
\Delta^{\bar z_j'}_\Omega x_j \Delta^{\bar z_{j+1}}_\Omega x_{j+1} \cdots  \Delta^{\bar z_{n}}_\Omega x_n \Omega, \ 
\Delta^{z_j''}_\Omega x_j \Delta^{z_{j-1}}_\Omega x_{j-1} \cdots \Delta^{z_{1}}_\Omega x_1 \Omega
\right)
\een
is well-defined, holomorphic, and independent of the subdivision $z_j=z_j'+z_j''$, in the domain defined by
\begin{subequations}
\label{domainz}
\begin{align}
&1/2>{\rm Re} z_1, \dots, {\rm Re} z_n > 0,\\
&{\rm Re} z_1 + \cdots +  {\rm Re} z_{j-1} +  {\rm Re} z_{j}' < 1/2, \\
&{\rm Re} z_n + \cdots +  {\rm Re} z_{j+1} +  {\rm Re} z_{j}'' < 1/2,
\end{align}
\end{subequations}
for any $x_1, \dots, x_n \in \M$. $F$ is continuous on the closure of this domain, where it
satisfies 
\ben
\label{Fbound}
|F(z_1, \dots, z_n)| \le \|x_1\| \cdots \| x_n \|.
\een
We now specialize the representation \eqref{psih} to our vector $\Psi = \Phi_\lambda$
and we abbreviate the corresponding series symbolically as
\ben
\label{uT}
\Phi_\lambda  = \bar {\rm T} \exp\left[\int_0^{1/2} h_\lambda (is) ds \right] \Omega, 
\een
where ``$\bar T$ exp'' denotes the ``(anti-) time-ordered exponential'' obtained by ordering the integration parameters as in \eqref{psih}, 
and $h_\lambda(s):=\sigma^\omega_s(h_\lambda) = \Delta^{is}_\Omega h_\lambda \Delta_{\Omega}^{-is}$. We may also define the more general vectors 
\ben
\label{uT1}
\Phi_\lambda(z)  = \bar {\rm T} \exp\left[i\int_0^{z} h_\lambda (w) dw \right] \Omega, 
\een
for $z$ in the strip $\mathbb{S}_{1/2} = \{z: {\rm Im} z \in [-1/2,0]\}$, where the integration is along the straight path $t \mapsto tz=w(t)$, and where the series defining 
the anti-path-ordered exponential is again seen to be absolutely convergent in norm using \eqref{Fdef}, \eqref{domainz}, \eqref{Fbound},
by making appropriate choices of $F$. Then $\Phi_\lambda(z)$ is a norm-continuous vector valued function in $z$
the strip ${\mathbb S}_{1/2}$ that is holomorphic in the interior, and $\Phi_\lambda = \Phi_\lambda(-i/2)$. 
Furthermore, using $\Delta^{-it}_\Omega P \Delta^{it}_\Omega = e^{2\pi t} P$ for $t \in \RR$ and $P\Omega=0$, one can see easily that 
Duhamel's formula holds ($\xi \in \H$),
\begin{multline}
\label{94}
(\xi, P\Phi_\lambda(t)) = i\int_0^{t} e^{2\pi s}
\Bigg( \xi, \,
\bar {\rm T} \exp\left[ i \int_0^{s} h_\lambda(s_1) ds_1 \right] \times \\
\times \sigma^\omega_s([P,h_\lambda])\, \bar {\rm T} \exp\left[ i\int_s^{t} h_\lambda(s_2) ds_2 \right] 
\, \Omega \Bigg) ds.
\end{multline}
In fact, the series defining the right side after expanding out the exponentials 
is absolutely convergent by \eqref{Fdef}, \eqref{domainz}, \eqref{Fbound} and because $h_\lambda, [P,h_\lambda]$ are in $\M$, hence
in particular bounded by 3). By the same arguments, the series has a holomorphic extension to the interior of ${\mathbb S}_{1/2}$ which is continuous on 
${\mathbb S}_{1/2}$, and bounded by $\le e^{2\pi | {\rm Re} z|} |z|   \| [P,h_\lambda] \|  \|\xi\| e^{|z| \|h_\lambda\|}$. 
The function $z \mapsto (P\xi, \Phi_\lambda(z))$ defined for an arbitrary but fixed $\xi \in \D(P)$
also has a holomorphic extension to the interior of ${\mathbb S}_{1/2}$ which is continuous on 
${\mathbb S}_{1/2}$, and it coincides with the series defining the right side of \eqref{94} for $z \in \RR$ by definition. Therefore, by the edge-of-the-wedge theorem, the two must be identical for any $z \in {\mathbb S}_{1/2}$. 
In particular, we have  
\ben
|(P\xi, \Phi_\lambda(z))| \le e^{2\pi |{\rm Re} z|} |z| \|[P,h_\lambda]\| \|\xi\| e^{|z| \|h_\lambda\|} \le e^{2\pi |{\rm Re} z|} |z| (4\pi + 2\log c^{-1})\lambda c^{-2|z|} \| \xi \|
\een
for any $\xi \in \D(P)$, and any $z \in {\mathbb S}_{1/2}$.
Using this for $z=-i/2$ shows that $|(P\xi, \Phi_\lambda)| \le {\rm const.} \|\xi\|$, so $\Phi_\lambda \in \D(P)$, in fact
\ben
\| P\Phi_\lambda\| \le (2\pi + \log c^{-1})\lambda c^{-1}, 
\een
recalling that $c$ is the constant in \eqref{ccomp}. 

5) The proof is completely analogous to that of 4) and is based on the well-known representation
\ben
\label{uT1}
u_{\lambda,s} = {\rm T} \exp\left[-i\int_0^s h_\lambda(t) dt \right], 
\een 
and $u_{\lambda,s}' = J_\Omega u_{\lambda,-s} J_\Omega$. 

6) Since $\varphi_\lambda$ is $c$-comparable to $\omega$, there exists 
$m_\lambda' \in \M'$ for which $\Phi_\lambda = m'_\lambda \Omega$, and therefore 
$\Phi_\lambda \in \D(\Delta^{\prime 1/2}_{\Phi_\lambda, \Omega})$. 
By definition $\Phi_\lambda \in \D(\Delta^{1/2}_{\Omega, \Phi_\lambda})= \D(\Delta^{\prime -1/2}_{\Phi_\lambda, \Omega})$. 
Applying the spectral theorem to $\log \Delta^{\prime}_{\Phi_\lambda, \Omega}$ these two facts give the claim.
\end{proof}

\vspace{-1cm}

\end{document}